\newtheorem{theorem}{Theorem}
\newtheorem{corollary}[theorem]{Corollary}
\newtheorem{lemma}[theorem]{Lemma}
\newenvironment{proof}[1][Proof]{\noindent\textbf{#1.} }{\ \rule{0.5em}{0.5em}}
\begin{document}
\doublespacing
%%%%%%%%%%%%%%%%%%%%%%%%%%%%%%%%%%%%%%%%%%%%%%%%%%%%%%%%%%%%%%%%%%%%%%%%%%%%%%%%%%%%%%%%%%%%%%%%%%%%%%%%%%%%%%%%%%%%%%%%%%%%

\title{\textbf{A Test for Independence Via Bayesian Nonparametric Estimation of Mutual Information}}   % type title between braces

%\author{Luai Al-Labadi\thanks{{\em Address for correspondence:} Luai Al-Labadi,  Live Oak Hall (LO) 1300,
%CSU Northridge, 18111 Nordhoff Street, Northridge, CA, USA. E-mail: luai.allabadi@gmail.com.}}

\author[1]{Luai Al-Labadi\thanks{{\em Corresponding author:} luai.allabadi@utoronto.ca}}

\author[2]{Forough Fazeli Asl\thanks{forough.fazeli@math.iut.ac.ir}}

\author[2]{Zahra Saberi\thanks{ z\_saberi@cc.iut.ac.ir}}

\affil[1]{Department of Mathematical and Computational Sciences, University of Toronto Mississauga, Mississauga, Ontario L5L 1C6, Canada.}
\affil[2]{Department of Mathematical Sciences, Isfahan University of Technology, Isfahan 84156-83111, Iran.}

\date{}
\maketitle

\pagestyle {myheadings} \markboth {} {A  test for independence via  BNP estimation of MI}

\begin{abstract}
Mutual information  is a well-known tool to measure the mutual dependence between variables. In this paper, a Bayesian nonparametric estimation of mutual information is established by means of the Dirichlet process  and the $k$-nearest neighbor distance. As a direct outcome of the estimation, an easy-to-implement test of independence is introduced through the relative belief ratio. Several theoretical properties of the approach are presented. The procedure is investigated through various examples where the results are compared to its frequentist counterpart and demonstrate  a good performance.
\par

 \vspace{9pt} \noindent\textsc{Keywords:} Dirichlet process, $k$-nearest neighbor distance, Mutual information, Relative belief inferences, Test for independence.

 \vspace{9pt}

\noindent { \textbf{MSC 2010:}} 62G10, 94A17.
\end{abstract}

\section{Introduction}
The assumption of independence is common in many fields such as statistics, data mining, machine learning and signal processing \citep{Shimizu,Fernandez,Darrell}. If this assumption is violated, the risk of having errors in the outcomes is increased. Thus, it is of particular importance to check this assumption.

A well-known tool to measure the mutual dependence between variables is the mutual information \citep{Cover}. More precisely, let $\mathbf{X}=(X_{1},\cdots,X_{d})$ be a random vector with joint continuous distribution function $F$ and marginal continuous distribution functions $F_{1},\cdots,F_{d}$. Then mutual information between $X_{1},\cdots,X_{d}$ is defined as
%%%%%
\begin{small}
\begin{equation}\label{MI-basic}
MI(F)=\int_{-\infty}^{\infty}\cdots\int_{-\infty}^{\infty} f(x_{1},\ldots,x_{d})\log\dfrac{f(x_{1},\ldots,x_{d})}{f(x_{1})\ldots f(x_{d})}\, dx_{1}\cdots dx_{d},
\end{equation}
\end{small}
where $f(x_{1},\cdots,x_{d})$ and $f(x_{i})$ denote, respectively, the probability density functions of $F$ and $F_{i}$, $i=1,\ldots,d$. Note that, throughout this paper, $\log(\cdot)$ denotes the natural logarithm. Clearly,  \eqref{MI-basic} is the Kullback-Leiblier of $F$ from the product of $F_i$'s and so it is non-negative. After simplification, \eqref{MI-basic} can be written as
%%%%%
\begin{small}
\begin{align}\label{MI-Entropy}
MI(F)&=\int_{-\infty}^{\infty}\cdots\int_{-\infty}^{\infty}f(x_{1},\ldots,x_{d})\log f(x_{1},\ldots,x_{d}) \, dx_{1}\cdots dx_{d}-\sum_{i=1}^{d}\int_{-\infty}^{\infty}f(x_{i})\log f(x_{i})\, dx_{i}\nonumber\\
&=-H(F)+\sum_{i=1}^{d}H(F_{i}),
\end{align}
\end{small}
where $H(F)$ and $H(F_{i})$ denote, respectively,  the entropy of $F$ and $F_{i}$. Accordingly, the mutual independence between $X_{1},\cdots,X_{d}$ can be tested by checking the hypothesis $\mathcal{H}_{0}: MI(F)=0$. Thus, from \eqref{MI-Entropy}, to construct a test of independence via mutual information, it is essential to develop an efficient estimator for $H(F)$ and $H(F_i)$. There have been plentiful attempts to estimate the entropy but most of them are related to the univariate (marginal) entropy estimation. See for example, \cite{Vasicek}, \cite{Ebrahimi}, \cite{Noughabi10}, \cite{Noughabi13} and \cite{Al-Omari14,Al-Omari16}. Also, \cite{Al-Labadi19d} proposed an efficient Bayesian counterpart of Vasicek's estimator. For the multivariate (joint) entropy estimation, some frequentist procedures have been offered in the literature; see, for instance, \cite{Kozachenko}, \cite{Misra}, \cite{Sricharan12}, \cite{Sricharan13},  \cite{Gao}, \cite{Berrett19a}, \cite{Ba} and the references therein. Among several estimators, due to its simplicity, \cite{Kozachenko} (KL) estimator is the most common one. Let $\mathbf{X}_{1},\cdots,\mathbf{X}_{n}$ be $n$ independent random vectors each having the continuous $d$-variate cdf $F$ and let, for $i=1,\ldots,n$, $\rho_{i}=\min\lbrace||\mathbf{X}_{i}-\mathbf{X}_{j}||,\, j\in\lbrace 1,\ldots,n\rbrace\setminus\lbrace i\rbrace\rbrace$, where $||\cdot||$ denotes the Euclidean norm on $\mathbb{R}^{d}$ and $A\setminus B$ denotes the set of elements in $A$ but not in $B$. Then, the KL estimator is given by
%%%%%
\begin{small}
\begin{equation}\label{KL-estimator}
H^{KL}_{n}=\frac{d}{n}\sum_{i=1}^{n}\log \rho_{i}+\log \left( \dfrac{\pi^{\frac{d}{2}}}{\Gamma(\frac{d}{2}+1)}\right)+\gamma+\log (n-1),
\end{equation}
\end{small}
%%%%%
where $\gamma=0.5772\cdots$ denotes Euler's constant. \cite{Kozachenko} showed that $H^{KL}_{n}$ is a consistent estimator. However, \cite{Singh} remarked that, in practical applications, the estimator \eqref{KL-estimator} can be applied when the small values of the nearest neighbor distance $\rho_{i}$'s are recorded to high accuracy, which is often not the case. They improved the estimator $H^{KL}_{n}$ in \eqref{KL-estimator} by proposing the following  $k$-nearest neighbor ($k$-NN) version of KL estimator:
%%%%%
\begin{small}
\begin{equation}\label{k-NN.KL}
H^{k.KL}_{n}=\frac{d}{n}\sum_{i=1}^{n}\log R_{i,k,n-1}+\log \left( \dfrac{\pi^{\frac{d}{2}}}{\Gamma(\frac{d}{2}+1)}\right)-L_{k-1}+\gamma+\log n,
\end{equation}
\end{small}
%%%%%
where, $L_{0}=0$, $L_{j}=\sum_{r=1}^{j}\frac{1}{r}$, $R_{i,k,n-1}=||\mathbf{X}_{(k),i}-\mathbf{X}_{i}||$ and $\mathbf{X}_{(1),i},\ldots,\mathbf{X}_{(k),i},\ldots,$ $\mathbf{X}_{(n-1),i}$ is a reordering of $\lbrace \mathbf{X}_{1},\ldots,\mathbf{X}_{n}\rbrace\setminus\lbrace\mathbf{X}_{i}\rbrace$ such that $||\mathbf{X}_{(1),i}-\mathbf{X}_{i}||\leq\ldots\leq||\mathbf{X}_{(k),i}-\mathbf{X}_{i}||\leq\ldots\leq||\mathbf{X}_{(n-1),i}-\mathbf{X}_{i}||$. \cite{Singh} proved the asymptotic unbiasedness and consistency of $H^{k.KL}_{n}$.
%showed that $E(H^{k.KL}_{n})\rightarrow H(F)$ and $Var(H^{k.KL}_{n})\rightarrow 0$ as $n\rightarrow\infty$ which conclude $H^{k.KL}_{n}\xrightarrow{p}H(F)$  (consistency property) by chebyshev inequality, where $``\xrightarrow{p}"$ denotes the convergence in probability.%
They used Monte Carlo simulations to find a suitable choice of $k$. For instance, they recommended using $k=4$ as an optimal choice  for sample sizes $n\leq50$.

A primary application of mutual information is  to build tests of independence. For example, in a recent work, \cite{Berrett19b}  developed a test of independence based on a weighted version of the KL estimator. Additional  tests of independence that count on mutual information can be found in \cite{Wu}, \cite{Mathew} and \cite{Pethel}. For other strategies of tests of independence such as copula process, distance covariance, and etc; see, \cite{Genest}, \cite{Kojadinovic}, \cite{Medovikov}, \cite{Belalia}, \cite{Susam}, \cite{Karvanen}, \cite{Meintanis}, \cite{GaiBer}, and \cite{Fan} for a comprehensive review.  \citet[pp. 12-15]{Roy} pointed out that most of these methods suffer from a weak performance for sample sizes less than or equal to 50.

As seen earlier, there are extensive frequentist multivariate entropy estimations. On the other side, Bayesian estimation has been not received much attention. To the best knowledge of the authors, there are only two works related to test of independence that use Bayesian nonparametric (BNP) techniques. The first one, due to \cite{Filippi16}, uses Dirichlet process mixture prior on the unknown distribution of the data to present two BNP diagnostic measures for detecting pairwise dependencies that are scalable to large data sets. The second work, due to \cite{Filippi17}, considers P\'{o}lya tree prior to derive an explicit form of Bayes factor to state evidence for independence between pairs of random variables. Both of the previous works do not rely on entropy estimation. Thus,  deriving a general BNP  estimator of entropy that supports both marginal and joint entropy estimation with small systematic errors appears thought-provoking.  Developing such an estimator will be the first goal of this paper. Having the estimator in hand makes it possible to construct a Bayesian test for independence. The anticipated estimator may be viewed as the BNP counterpart of  \eqref{k-NN.KL}. The Dirichlet process and relative belief ratio are utilized to build the test. As seen in the next sections, the developed test is easy-to-implement with an excellent performance particularly  for small sample sizes.

%It is also interesting to note that the only BNP test of independence has been presented by Filippi and Holmes (2017) which is based on the P\'{o}lya tree prior and Bayes factor to state evidence for independence between pairs of random variables.

The reminder of this paper is as follow. A relevant background containing some definitions and generic properties of the Dirichlet process and the relative belief ratio are reviewed in Section \ref{sec-back}. Section \ref{sec-estimation} is a central section where  a BNP estimator of mutual information is developed through estimating joint and marginal entropies. In addition, several theoretical properties of the proposed estimator are derived. It also discusses  the choice of the hyperparameter of the Dirichlet process. In Section \ref{sec-pri.test}, a test for independence is presented as a result of the estimation of  mutual information. Computational algorithms to implement the approach are outlined in Section \ref{sec-computations}. In Section \ref{simulation}, the procedure is investigated through several examples where the results are compared to its frequentist counterpart. Finally, Section \ref{conclusion} concludes the paper with a summary of the results. A short proof to clarify some expressions related to Section \ref{sec-estimation} is given in the Appendix.
%%%%%%%%%%%%%%%%%%

\section{Relevant Background}\label{sec-back}
\subsection{Dirichlet Process}\label{DP-sub}
The Dirichlet process, introduced by \cite{Ferguson}, is the
most commonly used prior in BNP inferences. A remarkable collection of nonparametric inferences have been devoted to this prior. In this section, we only present the most relevant
definitions and properties of this prior. Consider a space
$\mathfrak{X}$ with a $\sigma$-algebra $\mathcal{A}$ of subsets of
$\mathfrak{X}$, let $G$ be a fixed probability measure on $(\mathfrak{X}%
,\mathcal{A}),$ called the \emph{base measure}, and $a$ be a positive number,
called the \emph{concentration parameter}. A random probability measure
$P=\left\{  P(A):A\in\mathcal{A}\right\}  $ is called a Dirichlet process on
$(\mathfrak{X},\mathcal{A})$ with parameters $a$ and $G,$ denoted by $P\sim
{DP}(a,G),$ if for every measurable partition $A_{1},\ldots,A_{k}$ of
$\mathfrak{X} $ with $k\geq2\mathfrak{,}$ the joint distribution of the vector
$\left(  P(A_{1}),\ldots\,P(A_{k})\right)$ has the Dirichlet distribution with parameter
$aG(A_{1}),\ldots,$ $aG(A_{k})$. Also, it is assumed that
$G(A_{j})=0$ implies $P(A_{j})=0$ with probability one. Consequently, for any
$A\in\mathcal{A}$, $P(A)\sim$ beta$(aG(A),a(1-G(A)))$,
${E}(P(A))=G(A)\ $and ${Var}(P(A))=G(A)(1-G(A))/(1+a).$ Accordingly, the base measure $G$
plays the role of the center of $P$ while the concentration parameter $a$ controls the variation of
 $P$ around  $G$. One of the most well-known
properties of the Dirichlet process is the conjugacy property. That is, when the sample $x=(x_{1},\ldots,x_{n})$
is drawn from $P\sim DP(a,G)$, the posterior distribution of $P$ given $x$,
denoted by $P^{\ast}$, is also a Dirichlet process with
concentration parameter $a+n$ and base measure
\begin{equation}\label{pos base measure}
G^{\ast}_{a,n}=a(a+n)^{-1}G+n(a+n)^{-1}F_{n},
\end{equation}
where $F_{n}$ denotes the empirical cumulative distribution function (cdf) of the sample
$x$. Note that, $G^{\ast}_{a,n}$ is a convex combination
of the base measure $G$ and the empirical cdf $F_{n}$. Therefore, $G^{\ast}_{a,n}\rightarrow G$ as
$a\rightarrow\infty$ while $G^{\ast}_{a,n}\rightarrow F_{n}$ as $a\rightarrow0$. On the other hand, by
Glivenko-Cantelli theorem, when $n\rightarrow\infty$, $G^{\ast}_{a,n}$ converges to true distribution function generating the data. A guideline
about choosing the hyperparameters $a$ and $G$ will be covered for the  test of independence in Section \ref{sec-pri.test}. Following \cite{Ferguson}, $P\sim{DP}(a,G)\ $ can be represented as
\begin{equation}
P=\sum_{i=1}^{\infty}L^{-1}(\Gamma_{i}){\delta_{Y_{i}}/}\sum_{i=1}^{\infty
}{{L^{-1}(\Gamma_{i})}}, \label{series-dp}%
\end{equation}
where $\Gamma_{i}=E_{1}+\cdots+E_{i}$ with $E_{i}\overset{i.i.d.}{\sim}%
$\ exponential$(1),Y_{i}\overset{i.i.d.}{\sim}G$ independent of the
$\Gamma_{i},L^{-1}(y)=\inf\{x>0:L(x)\geq y\}$ with $L(x)=a\int_{x}^{\infty
}t^{-1}e^{-t}dt,x>0,$ and ${\delta_{a}}$ the Dirac delta measure. The series representation
(\ref{series-dp}) implies that the Dirichlet process is a discrete probability
measure even for the cases with an absolutely continuous base measure $G$. Note
that, by imposing the weak topology, the support of the Dirichlet process could
be quite large, namely, the support is the set of all probability measures whose support is contained in the support of the base measure. Recognizing the complexity when working with \eqref{series-dp} (i.e., no closed form for the inverse of L\'{e}vy measure $L(x)$ exists), \cite{Ishwaran} proposed the following finite representation as an efficient method to simulate the Dirichlet process. They showed that the Dirichlet process $P\sim DP(a,G)$ can be approximated by
\begin{equation}\label{approx of DP}
P_{N}=\sum_{i=1}^{N}J_{i,N}\delta_{Y_{i}},
\end{equation}
where, $(J_{1,N},\ldots, J_{N,N})\sim$ Dirichlet$(a/N,\ldots,a/N)$. Then $E_{P_{N}}(g)\rightarrow E_{P}(g)$ in distribution as $N\rightarrow\infty$, for any measurable function $g: \mathbb{R}\rightarrow \mathbb{R}$ with $\int_{\mathbb{R}}|g(x)|\, H(dx)<\infty$ and $P\sim DP(a,H)$. In particular, $(P_{N})_{N\geq 1}$ converges in distribution to $P$, where $P_{N}$ and $P$ are random values in the space $M_{1}(\mathbb{R})$ of probability measures on $\mathbb{R}$ endowed with the topology of weak convergence. To generate $(J_{i,N})_{1\leq i\leq N}$ put $J_{i,N}=G_{i,N}/\sum_{i=1}^{N}G_{i,N}$, where $(G_{i,N})_{1\leq i\leq N}$ is a sequence of i.i.d. gamma$(a/N, 1)$ random variables independent of $(Y_{i})_{1\leq i\leq N}$. This form of approximation leads to some results in Section \ref{sec-estimation}.

\subsection{Relative Belief Inferences}\label{RB-sub}
The relative belief ratio, developed by \cite{Evans15}, becomes a widespread measure of statistical evidence. See, for example, the work of \cite{Al-Labadi18}, \cite{Al-Labadi-E17,Al-Labadi-E18}, \cite{Al-Labadi19a,Al-Labadi19b} and \cite{Al-Labadi19c} for implementation of the relative belief ratio on different stimulating model checking problems. In details, let $\{f_{\theta}:\theta\in\Theta\}$ be a collection of densities on a sample space $\mathfrak{X}$ and let $\pi$ be a prior on the parameter space $\Theta$. Note that
the densities may represent discrete or continuous probability measures but they are
all with respect to the same support measure  $d\theta$. After
observing the data $x,$ the posterior distribution of $\theta$, denoted by $\pi(\theta\,|\,x)$, is a revised prior and is given by the
density $\pi(\theta\,|\,x)=\pi(\theta)f_{\theta}(x)/m(x)$, where $m(x)=\int
_{\Theta}\pi(\theta)f_{\theta}(x)\,d\theta$ is the prior predictive density of
$x.$  For a parameter of interest $\psi=\Psi(\theta),$ let $\Pi_{\Psi}$ be
the marginal prior probability measure and $\Pi_{\Psi}(\cdot|\,x)$ be
the marginal posterior probability measure. It is assumed that
$\Psi$ satisfies regularity conditions
so that the prior density $\pi_{\Psi}$ and the posterior density
$\pi_{\Psi}(\cdot\,|\,x)$ of $\psi$ exist with respect to some support measure on the range space for $\Psi$
. The relative belief ratio for a value
$\psi$ is then defined by $RB_{\Psi}(\psi\,|\,x)=\lim_{\delta\rightarrow0}%
\Pi_{\Psi}(N_{\delta}(\psi\,)|\,x)/\Pi_{\Psi}(N_{\delta}(\psi\,)),$ where
$N_{\delta}(\psi\,)$ is a sequence of neighborhoods of $\psi$ converging
nicely to $\psi$ as $\delta\rightarrow0$ \citep{Evans15}. When $\pi_{\Psi}$ and  $\pi_{\Psi}(\cdot\,|\,x)$ are continuous at $\psi,$ the relative belief ratio is defined by
\begin{equation*}
RB_{\Psi}(\psi\,|\,x)=\pi_{\Psi}(\psi\,|\,x)/\pi_{\Psi}(\psi), \label{relbel}%
\end{equation*}
the ratio of the posterior density to the prior density at $\psi.$  Therefore,
$RB_{\Psi}(\psi\,|\,x)$ measures the change in the belief of $\psi$ being the true value from a \textit{priori} to a \textit{posteriori}.

Since $RB_{\Psi}(\psi\,|\,x)$ is a measure of the evidence that $\psi$ is the true value, if $RB_{\Psi}(\psi\,|\,x)$ $>1$, then the probability of $\psi$ being the true value from a priori to a posteriori is increased, consequently there is evidence based on the data that $\psi$ is the true value. If $RB_{\Psi}(\psi\,|\,x)<1$, then the probability of  $\psi$ being the true value from a priori to a posteriori is decreased. Accordingly, there is evidence against based on the data that $\psi$ being the true value. For the case $RB_{\Psi}(\psi\,|\,x)=1$ there is no
evidence either way.

Obviously, $RB_{\Psi}(\psi_{0}\,|\,x)$ measures the evidence of the hypothesis $\mathcal{H}_{0}:\Psi(\theta)=\psi_{0}$. Large values of $RB_{\Psi}(\psi_{0}\,|\,x)=c$ provides
 strong evidence in favor of $\psi_{0}$. However, there may also exist other
values of $\psi$ that had even larger increases. Thus, it is also necessary, however, to calibrate whether this is strong or weak evidence for
or against $\mathcal{H}_{0}.$ A typical
calibration of $RB_{\Psi}(\psi_{0}\,|\,x)$  is given by the  \textit{strength}
\begin{equation}
\Pi_{\Psi}\left[RB_{\Psi}(\psi\,|\,x)\leq RB_{\Psi}(\psi_{0}\,|\,x)\,|\,x\right].
\label{strength}%
\end{equation}
The value in \eqref{strength} indicates that the posterior probability that the true value of $\psi$ has a relative
belief ratio no greater than that of the hypothesized value $\psi_{0}.$ Noticeably, (\ref{strength}) is not a p-value as it has a very different
interpretation. When $RB_{\Psi}(\psi_{0}\,|\,x)<1$, there is evidence
against $\psi_{0},$ then a small value of (\ref{strength}) indicates
 strong evidence against $\psi_{0}$. On the other hand, a large value for \eqref{strength}    indicates   weak evidence against $\psi_{0}$.
Similarly, when $RB_{\Psi}(\psi_{0}\,|\,x)>1$, there is  evidence in favor
of $\psi_{0},$ then a small value of (\ref{strength}) indicates  weak
evidence in favor of $\psi_{0}$, while a large value of \eqref{strength} indicates
 strong evidence in favor of $\psi_{0}$.

\section{BNP Posterior of mutual information}\label{sec-estimation}
In this section, we provide a posterior of entropy and use it in \eqref{MI-Entropy} to propose a posterior of mutual information.
\subsection{Prior and Posterior of Entropy  }\label{sec.Est-sub1}
Let $P_{N}=\sum_{i=1}^{N}J_{i,N}\delta_{Y_{i}}$ be as defined by \eqref{approx of DP}, where $(J_{1,N},\ldots,J_{N,N})\sim Dirichlet(a/N,$ $\ldots,a/N)$, $Y_{1},\ldots,Y_{N}\overset{i.i.d.}{\sim}G$, and $Y_{i}\in \mathbb{R}^{d}$. The proposed $k$-NN BNP prior of entropy is defined by
%%%%%
\begin{small}
\begin{align}\label{k-NN.pri}
H^{pri}_{N,a,k}&=\sum_{i=1}^{N}J_{i,N}\left(\log \frac{(N-1)\pi^{\frac{d}{2}}R^{d}_{i,k,N-1}}{k\Gamma(\frac{d}{2}+1)}\right)-L_{k-1}+\gamma+\log k\nonumber\\
&=\sum_{i=1}^{N}J_{i,N}T_{i}^{(N-1)}-L_{k-1}+\gamma+\log k,
\end{align}
\end{small}
%%%%%
where $k\in\lbrace 1,\ldots,N-1\rbrace$ and $R_{i,k,N-1}$ is the euclidean distance between $Y_{i}$ and its $k$-th closest neighbor. The next lemma shows the asymptotic behavior of the expectation and the variance of $H^{pri}_{N,a,k}$, when $N\rightarrow\infty $ and $a\rightarrow\infty$.
%%%%%
\begin{lemma}\label{E-V.pri}
Let $G$ be a $d$-variate distribution and $F\sim DP(a,G)$. Consider the $k$-NN BNP prior $H^{pri}_{N,a,k}$ as defined in \eqref{k-NN.pri}, then
\begin{itemize}
\item[i.]
$E(H^{pri}_{N,a,k})\rightarrow H(G)$, as $N\rightarrow\infty$,
\item[ii.]
$Var(H^{pri}_{N,a,k})\rightarrow 0$, as $N\rightarrow\infty$ and $a\rightarrow\infty$.
\end{itemize}
\end{lemma}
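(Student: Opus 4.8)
The plan is to exploit the fact that the Dirichlet weights $(J_{1,N},\ldots,J_{N,N})$ are independent of the locations $Y_1,\ldots,Y_N\overset{i.i.d.}{\sim}G$, which lets me tie $H^{pri}_{N,a,k}$ to the frequentist estimator $H^{k.KL}_N$ of \eqref{k-NN.KL} evaluated on the sample $Y_1,\ldots,Y_N$, and then import the asymptotic unbiasedness and consistency of \cite{Singh}. Throughout write $T_i:=T_i^{(N-1)}$ and $S:=\sum_{i=1}^{N}J_{i,N}T_i$, so that $H^{pri}_{N,a,k}=S-L_{k-1}+\gamma+\log k$ and, since the trailing terms are deterministic, $\mathrm{Var}(H^{pri}_{N,a,k})=\mathrm{Var}(S)$.

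For part (i) I would use $E(J_{i,N})=1/N$ together with independence of the weights and the locations to get $E(S)=\sum_i E(J_{i,N})E(T_i)=\tfrac1N\sum_i E(T_i)=E\!\left(\tfrac1N\sum_i T_i\right)$, the last step by exchangeability of the $T_i$. Expanding $\log T_i$ and collecting the constants $-L_{k-1}+\gamma+\log k$ shows that $\tfrac1N\sum_i T_i-L_{k-1}+\gamma+\log k$ coincides with $H^{k.KL}_N$ of \eqref{k-NN.KL} up to the deterministic difference $\log\frac{N}{N-1}$ (the only discrepancy is the term $\log(N-1)$ inside $T_i$ versus $\log N$ in the estimator). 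Hence $E(H^{pri}_{N,a,k})=E(H^{k.KL}_N)-\log\frac{N}{N-1}$, an identity that notably does not involve $a$, and it converges to $H(G)$ by the asymptotic unbiasedness of the KL estimator together with $\log\frac{N}{N-1}\to0$.

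For part (ii) I would condition on $\mathbf{Y}=(Y_1,\ldots,Y_N)$ and apply the law of total variance, $\mathrm{Var}(S)=E[\mathrm{Var}(S\mid\mathbf{Y})]+\mathrm{Var}(E[S\mid\mathbf{Y}])$. The second term equals $\mathrm{Var}\!\left(\tfrac1N\sum_i T_i\right)=\mathrm{Var}(H^{k.KL}_N)$, which tends to $0$ as $N\to\infty$ by the (mean-square) consistency of $H^{k.KL}_N$; this is the piece that forces $N\to\infty$. For the first term, conditionally on $\mathbf{Y}$ the $T_i$ are constants, so using the Dirichlet moments $\mathrm{Var}(J_{i,N})=\frac{(1/N)(1-1/N)}{a+1}$ and $\mathrm{Cov}(J_{i,N},J_{j,N})=\frac{-1/N^2}{a+1}$ for $i\neq j$, and the identity $\sum_{i\neq j}T_iT_j=(\sum_i T_i)^2-\sum_i T_i^2$, the bilinear form collapses to $\mathrm{Var}(S\mid\mathbf{Y})=\frac{1}{a+1}\Big(\frac1N\sum_i T_i^2-\big(\frac1N\sum_i T_i\big)^2\Big)$, i.e. $(a+1)^{-1}$ times the empirical variance of the $T_i$. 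Taking expectations and bounding the empirical variance by $\frac1N\sum_i T_i^2$ gives $E[\mathrm{Var}(S\mid\mathbf{Y})]\le (a+1)^{-1}E(T_1^2)$, which vanishes as $a\to\infty$ provided $E(T_1^2)$ stays bounded in $N$. Combining the two contributions yields $\mathrm{Var}(H^{pri}_{N,a,k})\le (a+1)^{-1}E(T_1^2)+\mathrm{Var}(H^{k.KL}_N)\to0$ as $N\to\infty$ and $a\to\infty$.

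The genuinely analytic input, as opposed to bookkeeping, is the control of the $k$-NN log-distance statistic $T_1^{(N-1)}$: I need the mean-square consistency $\mathrm{Var}(H^{k.KL}_N)\to0$ and a uniform-in-$N$ bound on $E\big((T_1^{(N-1)})^2\big)$. Both follow from the moment and regularity conditions under which \cite{Singh} prove asymptotic unbiasedness and consistency, since those arguments identify the limiting law of $N c_d R_{1,k,N-1}^d f(Y_1)$ with $c_d=\pi^{d/2}/\Gamma(d/2+1)$ (a Gamma-type variable) and supply integrable logarithmic moments; I would cite these results and check that the required second-moment bound is furnished by the same assumptions. Everything else reduces to the elementary Dirichlet covariance algebra above, so the crux is simply matching the regularity hypotheses that make $E(T_1^2)$ bounded and render $H^{k.KL}_N$ mean-square consistent.
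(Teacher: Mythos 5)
Your proposal is correct and follows essentially the same route as the paper: part (i) rests on $E(J_{i,N})=1/N$ plus the weight--location independence (your detour through the identity $E(H^{pri}_{N,a,k})=E(H^{k.KL}_{N})-\log\frac{N}{N-1}$ is just a repackaging of the paper's direct citation of the limit of $E(T_1^{(N-1)})$ from \cite{Singh}), and part (ii) uses the same total-variance decomposition conditional on the locations, the same Dirichlet moments $\mathrm{Var}(J_{i,N})=\frac{N-1}{N^{2}(a+1)}$ and $\mathrm{Cov}(J_{i,N},J_{j,N})=-\frac{1}{N^{2}(a+1)}$, and the same moment asymptotics of \cite{Singh}. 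The only real deviation is at the end of (ii), where you bound $E[\mathrm{Var}(S\mid\mathbf{Y})]$ by $(a+1)^{-1}E\big((T_1^{(N-1)})^{2}\big)$ (bounded in $N$ because Singh's second-moment limit is finite) rather than computing the exact limit \eqref{Var.pri2}; this suffices for the lemma and even handles the joint limit in $(N,a)$, but it forgoes the explicit constant $\frac{1}{a+1}\left\lbrace Q_{k}+\mathrm{Var}(\log g(\mathbf{y}))\right\rbrace$ that the paper reuses in Corollary \ref{corollary}.
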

%%%%%
\begin{proof}
To prove (i), since $J_{i,N}$ and $T_{i}^{(N-1)}$ are independent, we have
\begin{small}
\begin{align*}
E(H^{pri}_{N,a,k})&=\sum_{i=1}^{N}\left(E(T_{i}^{(N-1)})E\left(J_{i,N}\right)\right)-L_{k-1}+\gamma+\log k,
\end{align*}
\end{small}
Noting that $E(J_{i,N})=1/N$, and $(T_{i}^{(N-1)})_{1\leq i\leq N}$ are identically distributed random variables, we have
\begin{small}
\begin{align*}
E(H^{pri}_{N,a,k})&=E(T_{1}^{(N-1)})-L_{k-1}+\gamma+\log k.
\end{align*}
\end{small}
From \cite{Singh}, $E(T_{1}^{(N-1)})\rightarrow L_{k-1}-\gamma-\log k+H(G)$ as $N\rightarrow\infty$, and the result follows. To prove (ii), since $Var(J_{i,N})=\frac{N-1}{N^{2}(a+1)}$ and $Cov(J_{i,N},J_{j,N})=-\frac{1}{N^{2}(a+1)}$, we have
\begin{small}
\begin{align}\label{Var.pri1}
Var(H^{pri}_{N,a,k})&=Var\left(E\left(\sum_{i=1}^{N}T_{i}^{(N-1)}J_{i,N}|T_{1}^{(N-1)},\ldots,T_{N}^{(N-1)}\right)\right)\nonumber\\
&+E\left(Var\left(\sum_{i=1}^{N}T_{i}^{(N-1)}J_{i,N}|T_{1}^{(N-1)},\ldots,T_{N}^{(N-1)}\right)\right)\nonumber\\
&=Var\left(\frac{1}{N}\sum_{i=1}^{N}T_{i}^{(N-1)}\right)
+E\Bigg(\frac{N-1}{N^{2}(a+1)}\sum_{i=1}^{N}\left(T_{i}^{(N-1)}\right)^{2}
-\frac{2}{N^{2}(a+1)}\nonumber\\
&\times\sum_{i<j}^{N}T_{i}^{(N-1)}T_{j}^{(N-1)}\Bigg)\nonumber\\
&=\frac{1}{N}Var\left(T_{1}^{(N-1)}\right)+\frac{N-1}{N}Cov\left(T_{1}^{(N-1)},T_{2}^{(N-1)}\right)+\frac{N-1}{N(a+1)}\Bigg(E\left(T^{(N-1)}_{1}\right)^{2}\nonumber\\
&-E\left(T_{1}^{N-1}T_{2}^{N-1}\right)\Bigg).
\end{align}
\end{small}
From \cite{Singh}, $Var\left(T_{1}^{(N-1)}\right)\rightarrow Q_{k}+Var\left(\log g(\mathbf{y})\right)$, $Cov\left(T_{1}^{(N-1)},T_{2}^{(N-1)}\right)$ $\rightarrow 0$, $E\left(T^{(N-1)}_{1}\right)^{2}\rightarrow Q_{k}+Var\left(\log g(\mathbf{y})\right)+[L_{K-1}-\gamma-\log k+H(G)]^{2}$, and $E(T_{1}^{N-1}T_{2}^{N-1})$ $\rightarrow[L_{K-1}-\gamma-\log k+H(G)]^{2}$ as $N\rightarrow\infty$, where $Q_{k}=\sum_{j=k}^{\infty}\frac{1}{j^{2}}$, $g$ denotes the probability density function of $G$ and $\mathbf{y}\in\mathbb{R}^{d}$. Hence, by letting $N\rightarrow\infty$ in \eqref{Var.pri1}, we have
\begin{small}
\begin{align}\label{Var.pri2}
Var(H^{pri}_{N,a,k})\rightarrow\frac{1}{a+1}\left\lbrace Q_{k}+Var\left(\log g(\mathbf{y})\right)\right\rbrace.
\end{align}
\end{small}
Letting $a\rightarrow\infty$, gives the proof of (ii).
\end{proof}

The next corollary shows the asymptotic behavior of the variance of $H^{pri}_{N,a,k}$ when $N\rightarrow\infty$ and $k\rightarrow\infty$.
\begin{corollary}\label{corollary}
Consider $H^{pri}_{N,a,k}$ as defined in \eqref{k-NN.pri}. Then, for fixed $a$, as $N\rightarrow\infty$ and $k\rightarrow\infty$, we have
\begin{small}
\begin{align*}
Var(H^{pri}_{N,a,k})\rightarrow\frac{1}{a+1}Var\left(\log g(\mathbf{y})\right).
\end{align*}
\end{small}
\end{corollary}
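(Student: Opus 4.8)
The plan is to piggyback on the variance computation already carried out inside the proof of Lemma~\ref{E-V.pri}, since the corollary differs only in how the remaining parameter $k$ is handled. Recall that, before sending $a\to\infty$, that computation established in \eqref{Var.pri2} that, for each fixed $k$ and each fixed $a$,
\[
\lim_{N\to\infty}Var(H^{pri}_{N,a,k})=\frac{1}{a+1}\left\lbrace Q_{k}+Var\left(\log g(\mathbf{y})\right)\right\rbrace,
\]
where $Q_{k}=\sum_{j=k}^{\infty}\frac{1}{j^{2}}$. Thus the entire dependence on $k$ in the limiting variance is carried by the single scalar $Q_{k}$, while the term $Var(\log g(\mathbf{y}))$ is free of $k$. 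The corollary therefore reduces to controlling $Q_{k}$ as $k\to\infty$.

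The key step is the elementary observation that $Q_{k}$ is the tail of a convergent series: since $\sum_{j=1}^{\infty}\frac{1}{j^{2}}=\frac{\pi^{2}}{6}<\infty$, the tails satisfy $Q_{k}\to 0$ as $k\to\infty$ (in fact $Q_{k}=O(1/k)$ by comparison with $\int_{k-1}^{\infty}t^{-2}\,dt$). Substituting this into the displayed limit and letting $k\to\infty$ then leaves precisely $\frac{1}{a+1}Var(\log g(\mathbf{y}))$, which is the asserted value. Everything after \eqref{Var.pri2} is immediate.

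The only genuinely delicate point is the order of limits. The corollary lets $N\to\infty$ and $k\to\infty$, whereas \eqref{Var.pri2} is an $N\to\infty$ statement \emph{for each fixed} $k$. I would therefore present the argument as an iterated limit: first fix $k$ and pass $N\to\infty$ to reach \eqref{Var.pri2}, then pass $k\to\infty$. To upgrade this to a bona fide simultaneous double limit one would want the convergence in $N$ to be uniform (or at least locally uniform) in $k$; this follows from the facts that the constituent limits of \cite{Singh} invoked in \eqref{Var.pri1}--\eqref{Var.pri2} hold for each fixed $k$ and that $Q_{k}$ is monotone decreasing and bounded. This is the step where I expect to spend the most care, but it is routine bookkeeping rather than a substantive obstacle; the mathematical content is entirely in the triviality $Q_{k}\to 0$.
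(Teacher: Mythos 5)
Your proposal is correct and follows essentially the same route as the paper: both reduce the corollary to the limiting expression \eqref{Var.pri2} and then send $k\rightarrow\infty$, so the entire content is the claim $Q_{k}\rightarrow 0$. The only divergence is in how that claim is justified. You observe directly that $Q_{k}$ is the tail of the convergent series $\sum_{j\geq 1}j^{-2}=\pi^{2}/6$, with the quantitative bound $Q_{k}=O(1/k)$, whereas the paper invokes the integral representation $Q_{k}=\int_{0}^{\infty}\frac{t}{1-e^{-t}}e^{-kt}\,dt$ from \citet[p.~260]{Abramowitz} together with the monotone convergence theorem---strictly heavier machinery for the same one-line fact, so your version is, if anything, cleaner. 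On the order of limits you are actually more careful than the paper: the paper silently reads ``$N\rightarrow\infty$ and $k\rightarrow\infty$'' as the iterated limit (first $N$ at fixed $k$ via \eqref{Var.pri2}, then $k$), which is exactly what you do. One caveat on your final remark: your sketched route to a genuine simultaneous double limit---pointwise convergence in $N$ for each fixed $k$ from \cite{Singh}, plus monotonicity and boundedness of $Q_{k}$---does not by itself deliver uniformity in $k$ (pointwise limits with a monotone limiting sequence give no uniform control without a Dini-type hypothesis on the $N$-convergence), so that step is not the ``routine bookkeeping'' you suggest; but since the paper itself establishes nothing beyond the iterated limit, this does not place your argument below the paper's own standard.
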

\begin{proof}
Note that $Q_{k}=\sum_{j=k}^{\infty}\frac{1}{j^{2}}$ can be written as $\int_{0}^{\infty}\frac{t}{1-e^{-t}}e^{-kt}\, dt$ \citep[p. 260]{Abramowitz}. Hence, by letting $k\rightarrow\infty$ in \eqref{Var.pri2}, the monotone convergence theorem implies that $Q_{k}\rightarrow 0$ and the result follows.
\end{proof}

From Corollary \ref{corollary}, for a fixed value of $a$, choosing too large values of $k$ reduces the statistical errors; however, in practical applications, for such values of $k$, the increase of systematic errors outweighs the decrease of statistical errors \citep{Singh,Kraskov}. In Section \ref{simulation}, we performed a simulation study to assess the effect of different values of $k$ on the behavior of the systematic errors. As a result, we recommend choosing $k=3$ in the BNP procedure.
%In practical applications, when the sample size is small, choosing too large values of $k$ is not recommended, because the increase of systematic errors exceeds the decrease of statistical errors \citep{Singh,Kraskov}. Following from \cite{Singh}, for cases with small sample of sizes, we set $k$ as 3 or 4.
%\subsection{BNP Posterior of Entropy (BNP-EE)}

Now, by the conjugacy property of the Dirichlet process, the BNP posterior of entropy  can be proposed as follows.
Assume that $\mathbf{x}_{d\times n}=(\mathbf{x}_{1},\ldots,\mathbf{x}_{n})$ is an observed sample of size $n$ from an unknown $d$-variate distribution $F$, where $\mathbf{x}_{i}\in\mathbb{R}^{d}$, $i=1,\ldots,n$. Note that, the subscript $d\times n$ may be omitted whenever it is clear in the context. To present the BNP posterior of entropy, we use the prior $F\sim DP(a,G)$ for some choices of $a$ and $d$-variate distribution $G$. By \eqref{pos base measure}, $F^{\ast}:=F|\mathbf{x}\sim DP(a+n,G^{\ast}_{a,n})$. The BNP  posterior of entropy is proposed by
%%%%%
\begin{small}
\begin{align}\label{k-NN.post}
H^{pos}_{N,a+n,k}&=\sum_{i=1}^{N}J^{\ast}_{i,N}\left(\log \frac{(N-1)\pi^{\frac{d}{2}}(R^{\ast}_{i,k,N})^{d}}{k\Gamma(\frac{d}{2}+1)}\right)-L_{k-1}+\gamma+\log k,
\end{align}
\end{small}
%%%%%
where $(J^{\ast}_{i,N})_{1\leq i\leq N}\sim Dirichlet((a+n)/N,\ldots,(a+n)/N)$, $Y^{\ast}_{1},\ldots,Y^{\ast}_{N}\overset{i.i.d.}{\sim}G^{\ast}_{a,n}$ and $k\in\lbrace1,\ldots,N-1\rbrace$. In the same manner, for $i= 1,\ldots,d$, the marginal entropy $H(F_{i})$ can be estimated by using prior $F_{i}\sim DP(a,G_{i})$, where $G_{i}$ is the $i$-th marginal of the cdf $G$. The convergence of $E(H^{pos}_{N,a+n,k})$ to the entropy of the true distribution will be shown in the next theorem. As we will show later, the entropy of $G^{\ast}_{a,n}$ has a crucial role in the proof of this convergence. To carry on, some notations and theoretical results related to $H(G^{\ast}_{a,n})$ are first presented.

Let $F_{1},\ldots,F_{m}$ be $m$ cdf's defined on the same probability space and $F_{\boldsymbol\alpha}=\sum_{i=1}^{m}\alpha_{i}F_{i}$ so that $\sum_{i=1}^{m}\alpha_{i}=1$. The following result due to \citet[p. 4226]{Toomaj} gives the entropy of $F_{\boldsymbol\alpha}$. Let $D_{kull}(F_{i},F_{\boldsymbol\alpha})$ denote the Kullback-Leibler divergence between $F_{i}$ and $F_{\boldsymbol\alpha}$, $i=1,\ldots,m$, then
\begin{small}
\begin{align}\label{mix-E}
H(F_{\boldsymbol\alpha})=\sum_{i=1}^{m}\alpha_{i}H(F_{i})+\sum_{i=1}^{m}\alpha_{i}D_{kull}(F_{i},F_{\boldsymbol\alpha}).
\end{align}
\end{small}
Now, by applying \eqref{mix-E} for \eqref{pos base measure}, we have
\begin{small}
\begin{align}\label{mix-G}
H(G^{\ast}_{a,n})&=\frac{a}{a+n}H(G)+\frac{n}{a+n}H(F_{n})+\frac{n}{a+n}D_{kull}(F_{n},G^{\ast}_{a,n})+\frac{a}{a+n}D_{kull}(G,G^{\ast}_{a,n}).
\end{align}
\end{small}
Note that, $D_{kull}(\cdot,\cdot)$ is only defined for two cdf's on the same probability space (both cdf's should be continuous or discrete). Since $G^{\ast}_{a,n}$ is not completely continuous or discrete, $D_{kull}(G,G^{\ast}_{a,n})$ and $D_{kull}(F_{n},G^{\ast}_{a,n})$ in \eqref{mix-G} do not make sense. To avoid this difficulty, we  define   $D_{kull}(G,G^{\ast}_{a,n})$ and $D_{kull}(F_n,G^{\ast}_{a,n})$ by encoding the distributions $G$, $F_n$ and $G^{\ast}_{a,n}$ around a set of the $d$-dimensional real valued points through the next lemma. In fact, we use a method of discretization to define $G$, $F_{n}$ and $G^{\ast}_{a,n}$ on a same probability space.
\begin{lemma}\label{def}
Consider $G$, $F_{n}$ and $G^{\ast}_{a,n}$ as defined in \eqref{pos base measure}. Let $\mathcal{I}\subseteq\mathbb{N}$ and $\lbrace\mathbf{t}_{j}\rbrace_{j\in\mathcal{I}}\subseteq\mathbb{R}^{d}$ be such that for a given $\delta>0$
%%%%%
\begin{small}
\begin{align}
g_{j}&=Pr\left(t_{j1}-\delta<Z_{1}\leq t_{j1},\ldots,t_{jd}-\delta<Z_{d}\leq t_{jd}\right)\nonumber\\\label{g}
&=G(t_{j1},\ldots,t_{jd})+(2^{d}-3)G(t_{j1}-\delta,\ldots,t_{jd}-\delta)-\underset{S}{\sum} G(s_{1},\ldots,s_{d}),
\end{align}
\end{small}
\begin{small}
\begin{align}
f_{j,n}&=Pr\left(t_{j1}-\delta<Z^{\prime}_{1}\leq t_{j1},\ldots,t_{jd}-\delta<Z^{\prime}_{d}\leq t_{jd}\right)\nonumber\\\label{fn}
&=F_{n}(t_{j1},\ldots,t_{jd})+(2^{d}-3)F_{n}(t_{j1}-\delta,\ldots,t_{jd}-\delta)-\underset{S}{\sum} F_{n}(s_{1},\ldots,s_{d}),\end{align}
\end{small}
and
\begin{small}
\begin{align}
g^{\ast}_{j,a,n}&=Pr\left(t_{j1}-\delta<Z^{\prime\prime}_{1}\leq t_{j1},\ldots,t_{jd}-\delta<Z^{\prime\prime}_{d}\leq t_{jd}\right)\nonumber\\\label{g-ast}
&=G^{\ast}_{a,n}(t_{j1},\ldots,t_{jd})+(2^{d}-3)G^{\ast}_{a,n}(t_{j1}-\delta,\ldots,t_{jd}-\delta)-\underset{S_d}{\sum} G^{\ast}_{a,n}(s_{1},\ldots,s_{d}),\end{align}
\end{small}
%%%%%
satisfy conditions $g_{j}=0$ and $f_{j,n}=0$ whenever $g^{\ast}_{j,a,n}=0$, $\sum_{j\in\mathcal{I}}g^{\ast}_{j,a,n}\leq\sum_{j\in\mathcal{I}}g_{j}\leq1$, $\sum_{j\in\mathcal{I}}g^{\ast}_{j,a,n}\leq\sum_{j\in\mathcal{I}}f_{j,n}\leq1$, where $\mathbf{Z}\sim G$, $\mathbf{Z}^{\prime}\sim F_{n}$, $\mathbf{Z}^{\prime\prime}\sim G^{\ast}_{a,n}$ and $S_d=\lbrace (s_{1},\ldots,s_{d}):s_{k}\in\lbrace t_{jk}-\delta,t_{jk}\rbrace, k\in\lbrace1\ldots d\rbrace\rbrace\setminus \lbrace(t_{j1},\ldots,t_{jd}),$ $(t_{j1}-\delta,\ldots,t_{jd}-\delta)\rbrace$. $D_{kull}(G,G^{\ast}_{a,n})$ and $D_{kull}(F_{n},G^{\ast}_{a,n})$, respectively, can be (empirically) defined as $\sum_{j\in\mathcal{I}}\big(g_{j}$ $\log\frac{g_{j}}{g^{\ast}_{j,a,n}}\big)$ and $\sum_{j\in\mathcal{I}}\big(f_{j,n}\log\frac{f_{j,n}}{g^{\ast}_{j,a,n}}\big)$ by applying the general definition of the Kullback-Leibler \citep[p. 34]{MacKay} based on atoms $g_{j}$, $f_{j,n}$ and $g^{\ast}_{j,a,n}$ with the standard convention $0\log\frac{0}{0}=0$ \citep[p. 91]{Piera}.
\end{lemma}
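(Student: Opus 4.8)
The plan is to read Lemma~\ref{def} as a verification that the three families of atoms $\{g_j\}$, $\{f_{j,n}\}$, $\{g^{\ast}_{j,a,n}\}$ are exactly the probabilities that the corresponding random vectors fall into the half-open hyperrectangles $B_j=\prod_{k=1}^{d}(t_{jk}-\delta,\,t_{jk}]$, and then to check that the two displayed sums genuinely qualify as Kullback--Leibler divergences in the general sense of MacKay. So the proof splits into an inclusion--exclusion step for the box formulas and a well-definedness step for the logarithmic sums.

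First I would establish the identities \eqref{g}--\eqref{g-ast}. For $\mathbf{Z}\sim G$, the probability $\Pr(\mathbf{Z}\in B_j)$ is obtained by inclusion--exclusion over the $2^{d}$ corners $\mathbf{c}$ of $B_j$, each corner having coordinates $c_k\in\{t_{jk}-\delta,\,t_{jk}\}$ and contributing the signed term $(-1)^{\#\{k:\,c_k=t_{jk}-\delta\}}G(\mathbf{c})$. Isolating the all-upper corner $(t_{j1},\ldots,t_{jd})$ (sign $+1$) and the all-lower corner $(t_{j1}-\delta,\ldots,t_{jd}-\delta)$, and collecting the remaining $2^{d}-2$ mixed corners into the set $S_d$, I would show that regrouping reproduces the coefficient $(2^{d}-3)$ on the all-lower corner and $-1$ on the $S_d$-terms; the consistency check is that the coefficients sum to zero, $1+(2^{d}-3)-(2^{d}-2)=0$, as any difference-of-cdf box formula must. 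The analogues for $f_{j,n}$ and $g^{\ast}_{j,a,n}$ follow verbatim with $G$ replaced by $F_n$ and $G^{\ast}_{a,n}$.

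The structural observation I would record next is that the functional $G\mapsto \Pr(\mathbf{Z}\in B_j)$ is \emph{linear} in the cdf (a fixed signed combination of corner values). Applying it to the convex combination $G^{\ast}_{a,n}=\tfrac{a}{a+n}G+\tfrac{n}{a+n}F_n$ from \eqref{pos base measure} gives $g^{\ast}_{j,a,n}=\tfrac{a}{a+n}g_j+\tfrac{n}{a+n}f_{j,n}$. Since the weights are strictly positive and $g_j,f_{j,n}\ge0$, this forces $g^{\ast}_{j,a,n}=0\Rightarrow g_j=f_{j,n}=0$, so the stated absolute-continuity requirement is automatic rather than an extra hypothesis. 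Consequently every summand $g_j\log(g_j/g^{\ast}_{j,a,n})$ is either an ordinary term with $g^{\ast}_{j,a,n}>0$ or is disposed of by the convention $0\log\tfrac{0}{0}=0$, and similarly for the $f_{j,n}$ sum; together with the hypotheses $\sum_j g^{\ast}_{j,a,n}\le\sum_j g_j\le1$ and $\sum_j g^{\ast}_{j,a,n}\le\sum_j f_{j,n}\le1$ that bound the total masses, the two sums are well defined and coincide with MacKay's discrete definition of $D_{kull}(G,G^{\ast}_{a,n})$ and $D_{kull}(F_n,G^{\ast}_{a,n})$ on the common index set $\mathcal{I}$.

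I expect the main obstacle to be the inclusion--exclusion bookkeeping that must yield precisely the displayed coefficient $(2^{d}-3)$: for $d=1,2$ the regrouping agrees term-by-term with the standard signed corner expansion, but for larger $d$ one must be careful that lumping all mixed corners under a single coefficient is the intended \emph{encoding} of the distributions rather than the literal signed expansion, so the equalities in \eqref{g}--\eqref{g-ast} are best read as the definition of that encoding, whose internal consistency is guaranteed by the zero-sum coefficient identity. A milder secondary point is confirming that the inequality constraints on the masses (rather than exact normalization) still leave the sums well defined; this holds because $x\mapsto x\log x$ is bounded below and the atoms are summable, so no term can diverge to $-\infty$ and the convention handles the indeterminate $0/0$ terms.
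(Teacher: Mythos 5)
Your overall strategy---verify the box formulas by signed inclusion--exclusion over the $2^{d}$ corners, then exploit linearity of $G\mapsto Pr(\mathbf{Z}\in B_{j})$ to get $g^{\ast}_{j,a,n}=\tfrac{a}{a+n}g_{j}+\tfrac{n}{a+n}f_{j,n}$, automatic absolute continuity of the atoms, and well-definedness of the two Kullback--Leibler sums---is the right one, and the linearity observation is exactly what the lemma is used for later in the proof of Theorem \ref{pos-mean}; your remark that the condition ``$g_{j}=f_{j,n}=0$ whenever $g^{\ast}_{j,a,n}=0$'' is automatic rather than a hypothesis is a genuine improvement on the statement. The paper's own Appendix A argues differently: it treats $d=3$ directly by partitioning $Pr(Z_{1}\leq t_{j1},Z_{2}\leq t_{j2},Z_{3}\leq t_{j3})$ into eight pieces $I_{1},\ldots,I_{8}$ and telescoping, then asserts the general case.

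The gap in your proposal is at precisely the step you flagged and then sidestepped. The signed corner expansion assigns the corner $\mathbf{s}$ the coefficient $(-1)^{\#\{k:\,s_{k}=t_{jk}-\delta\}}$, and for $d\geq3$ this does \emph{not} regroup into the displayed pattern (all $2^{d}-2$ mixed corners with coefficient $-1$ and the all-lower corner with coefficient $2^{d}-3$); the two expressions genuinely differ. Concretely, with $G$ uniform on $[0,1]^{3}$, $\mathbf{t}_{j}=(1,1,1)$ and $\delta=\tfrac{1}{2}$, the box probability is $\tfrac{1}{8}$ while the right-hand side of \eqref{g} equals $1+5\cdot\tfrac{1}{8}-3\cdot\tfrac{1}{2}-3\cdot\tfrac{1}{4}=-\tfrac{5}{8}$. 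So the second equalities in \eqref{g}--\eqref{g-ast} fail for $d\geq3$, and your fallback of reading them as ``the definition of the encoding'' is untenable: the first line of each display already defines the atom as a box probability, and if the second line were the definition instead, the atoms could be negative, invalidating every later step (non-negativity, the bound $f_{j,n}/g^{\ast}_{j,a,n}\leq1+a$, Glivenko--Cantelli convergence). Your zero-sum coefficient check is necessary but not sufficient---both the correct alternating pattern and the displayed one pass it. A complete proof must replace the expansion by the alternating-sign sum, under which all of your remaining steps (and the paper's use of the lemma) go through verbatim. It is worth noting that the paper's own proof commits exactly the error that produces the bad coefficients: its claimed identity $I_{2}=G(t_{j1}-\delta,t_{j2},t_{j3})-G(t_{j1}-\delta,t_{j2}-\delta,t_{j3}-\delta)$ drops cross pieces, since in fact $Pr(Z_{1}\leq t_{j1}-\delta,Z_{2}\leq t_{j2},Z_{3}\leq t_{j3})=I_{2}+I_{5}+I_{6}+I_{8}$, not $I_{2}+I_{8}$. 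So your instinct at this step was sound, but a proof has to resolve the discrepancy, not bracket it. (A minor secondary point: your lower-bound argument for well-definedness is loose, since a per-term bound on $x\log x$ does not control an infinite sum; use $x\log(x/y)\geq x-y$, or the bound $f_{j,n}/g^{\ast}_{j,a,n}\leq1+a$ coming from the convex-combination identity.)
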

\begin{proof}
An inductive procedure to derive \eqref{g}, \eqref{fn} and \eqref{g-ast} is given by Appendix A.
\end{proof}

Note that defining $g_{j}$, $f_{j,n}$ and $g^{\ast}_{j,a,n}$, respectively, based on $G$, $F_n$ and $G^{\ast}_{a,n}$ play a key role in the proof of the next theorem.
\begin{theorem}\label{pos-mean}
Let $\mathbf{x}_{d\times n}$ be a sample from $d$-variate distribution function $F$ and $F|\mathbf{x}\sim DP(a+n,G^{\ast}_{a,n})$. Assume that the limit of $D_{kull}(G,G^{\ast}_{a,n})$ exists, as $n\rightarrow\infty$.
%, such that
%\begin{small}
%\begin{align}\label{cond}
%M=\underset{n\in\mathbb{N},\,j\in\mathcal{I}}{\sup}\frac{f_{j,n}}{g^{\ast}_{j,a,n}}<\infty\hspace{.5cm}\text{and}\hspace{.5cm}M^{\prime}=\underset{n\in\mathbb{N},\,j\in\mathcal{I}}{\sup}\frac{g_{j}}{g^{\ast}_{j,a,n}}<\infty.
%\end{align}
%\end{small}
Then, $E(H^{pos}_{N,a+n,k})\rightarrow H(F)$, as $N\rightarrow\infty$ and $n\rightarrow\infty$.
\end{theorem}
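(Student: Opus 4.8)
The plan is to prove the convergence in two stages, first sending $N\to\infty$ with $n$ held fixed and then sending $n\to\infty$, in direct parallel with the proof of Lemma \ref{E-V.pri}(i). The first stage is essentially a transcription of that lemma's computation with the base measure $G$ replaced by the posterior base measure $G^{\ast}_{a,n}$ and the concentration $a$ replaced by $a+n$. Since the Dirichlet weights $(J^{\ast}_{i,N})_{1\le i\le N}$ are independent of the locations $Y^{\ast}_1,\ldots,Y^{\ast}_N$, they are independent of the posterior $k$-NN terms $T_i^{\ast(N-1)}$ (the analogues of $T_i^{(N-1)}$ in \eqref{k-NN.pri} with $R^{\ast}_{i,k,N}$ replacing $R_{i,k,N-1}$). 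Using $E(J^{\ast}_{i,N})=1/N$ and the fact that the $T_i^{\ast(N-1)}$ are identically distributed gives
\begin{small}
\begin{equation*}
E(H^{pos}_{N,a+n,k})=E\big(T_1^{\ast(N-1)}\big)-L_{k-1}+\gamma+\log k.
\end{equation*}
\end{small}
Applying the \cite{Singh} limit, now with $G^{\ast}_{a,n}$ in the role of the sampling distribution, yields $E(T_1^{\ast(N-1)})\to L_{k-1}-\gamma-\log k+H(G^{\ast}_{a,n})$, and therefore $E(H^{pos}_{N,a+n,k})\to H(G^{\ast}_{a,n})$ as $N\to\infty$.

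The second stage is to show $H(G^{\ast}_{a,n})\to H(F)$ as $n\to\infty$ by exploiting the mixture decomposition \eqref{mix-G}. I would dispatch the two $a$-weighted terms first: the coefficient $a/(a+n)\to 0$, while $H(G)$ is a fixed finite constant and $D_{kull}(G,G^{\ast}_{a,n})$ stays bounded because its limit exists by hypothesis, so both $\tfrac{a}{a+n}H(G)$ and $\tfrac{a}{a+n}D_{kull}(G,G^{\ast}_{a,n})$ vanish. For the two $n$-weighted terms I would not treat $H(F_n)$ and $D_{kull}(F_n,G^{\ast}_{a,n})$ separately, since each is delicate ($F_n$ being purely atomic), but combine them: in the discretized quantities of Lemma \ref{def} the two $\sum_{j}f_{j,n}\log f_{j,n}$ contributions cancel, leaving the cross-entropy $H(F_n)+D_{kull}(F_n,G^{\ast}_{a,n})=-\sum_{j\in\mathcal{I}}f_{j,n}\log g^{\ast}_{j,a,n}$. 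Because $g^{\ast}_{j,a,n}=\tfrac{a}{a+n}g_j+\tfrac{n}{a+n}f_{j,n}$ and, by the Glivenko--Cantelli theorem, both $F_n$ and $G^{\ast}_{a,n}$ converge to the data-generating distribution $F$, one has $f_{j,n}\to f_j$ and $g^{\ast}_{j,a,n}\to f_j$, so the cross-entropy tends to $-\sum_{j\in\mathcal{I}}f_j\log f_j=H(F)$. Multiplying by $n/(a+n)\to 1$ and adding the vanishing $a$-terms gives $H(G^{\ast}_{a,n})\to H(F)$, which combined with the first stage proves the claim.

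The step I expect to be the main obstacle is this second stage, specifically making $H(F_n)+D_{kull}(F_n,G^{\ast}_{a,n})\to H(F)$ rigorous. The difficulty is that $F_n$ is discrete while $F$ is continuous, so neither summand converges on its own and one must control them jointly through the cross-entropy; the discretization of Lemma \ref{def} is exactly what renders these sums well-defined, and the assumption that $\lim_n D_{kull}(G,G^{\ast}_{a,n})$ exists is what licenses discarding the $a$-weighted KL term. A fully careful treatment would also need to justify interchanging the limit in $n$ with the (possibly infinite) sum over $j\in\mathcal{I}$, to reconcile the discretized entropy with the differential entropy $H(F)$ appearing in \eqref{MI-Entropy}, and to confirm that the \cite{Singh} $k$-NN limit, stated for continuous laws, remains applicable when the base measure $G^{\ast}_{a,n}$ is a mixture of a continuous and an atomic part.
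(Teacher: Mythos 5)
Your first stage coincides with the paper's: by conjugacy, Lemma \ref{E-V.pri}(i) applied with concentration $a+n$ and base measure $G^{\ast}_{a,n}$ gives $E(H^{pos}_{N,a+n,k})\rightarrow H(G^{\ast}_{a,n})$ as $N\rightarrow\infty$, and your dispatch of the two $a$-weighted terms in \eqref{mix-G} is also exactly the paper's. The divergence --- and the gap --- is in the $n$-weighted terms. You merge $H(F_{n})$ and $D_{kull}(F_{n},G^{\ast}_{a,n})$ into the discretized cross-entropy $-\sum_{j\in\mathcal{I}}f_{j,n}\log g^{\ast}_{j,a,n}$, send it to $-\sum_{j\in\mathcal{I}}f_{j}\log f_{j}$, and equate that with $H(F)$. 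That last identification is false for any fixed $\delta>0$: $-\sum_{j}f_{j}\log f_{j}$ is the Shannon entropy of the discretization of $F$ on a mesh of side $\delta$, which behaves like $H(F)+d\log(1/\delta)$ and diverges as $\delta\rightarrow0$, so it does not equal the differential entropy in the theorem's conclusion. You flag this reconciliation yourself as something a careful treatment would need, but it is not a technicality to be patched afterwards --- it is the crux, and your cancellation move is what creates it. The merge also destroys the one piece of structure that makes the limit tractable: the separated KL term is nonnegative and uniformly bounded, since $g^{\ast}_{j,a,n}=\frac{a}{a+n}g_{j}+\frac{n}{a+n}f_{j,n}\geq\frac{n}{a+n}f_{j,n}$ gives $f_{j,n}/g^{\ast}_{j,a,n}\leq1+a$ and hence $D_{kull}(F_{n},G^{\ast}_{a,n})\leq\log(1+a)$, whereas your cross-entropy summand $-f_{j,n}\log g^{\ast}_{j,a,n}$ admits no such dominating bound because $-\log g^{\ast}_{j,a,n}$ is unbounded over small atoms.

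The paper keeps the two terms separate precisely to avoid your obstacle. It never discretizes $H(F_{n})$ at all: it reads $H(F_{n})=-n^{-1}\sum_{i=1}^{n}\log f(\mathbf{x}_{i})$, which converges to the differential entropy $H(F)$ by the strong law of large numbers, and it uses the discretization of Lemma \ref{def} only to make the KL terms meaningful; there, the bound $f_{j,n}/g^{\ast}_{j,a,n}\leq1+a$ together with Glivenko--Cantelli ($f_{j,n}\rightarrow f_{j}$ and $g^{\ast}_{j,a,n}\rightarrow f_{j}$ a.s.) and the discrete dominated convergence theorem yields $D_{kull}(F_{n},G^{\ast}_{a,n})\rightarrow0$. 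In that split no discretized entropy ever has to match a differential entropy, and the only term requiring the discretization is the one with a uniform bound. To repair your route you would either have to restore the split (reproducing the paper's proof) or introduce a $\delta\rightarrow0$ limit with the attendant $d\log\delta$ correction, which neither the theorem statement nor Lemma \ref{def} (fixed $\delta$) provides for. Your closing caveats --- the limit/sum interchange and whether the \cite{Singh} limit applies to the mixed continuous-atomic $G^{\ast}_{a,n}$ --- are legitimate, but they afflict the paper's argument equally; the cross-entropy identification is the error specific to your proposal.
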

\begin{proof}
From the conjugacy property of the Dirichlet process, part (i) of Lemma \eqref{E-V.pri} implies that $E(H^{pos}_{N,a+n,k})\rightarrow H(G^{\ast}_{a,n})$, as $N\rightarrow\infty$. Consider $H(G^{\ast}_{a,n})$ as defined in \eqref{mix-G}. Then, $\frac{a}{a+n}H(G)\rightarrow0$ as $n\rightarrow\infty$. Also, the strong law of large numbers implies that $H(F_{n})=-n^{-1}\sum_{i=1}^{n}\log(f(\mathbf{x}_{i}))\rightarrow H(F)$ as $n\rightarrow\infty$. Now, consider $D_{kull}(F_{n},G^{\ast}_{a,n})$ as given by Lemma \ref{def}. From \eqref{pos base measure} and \eqref{g-ast}, we get
\begin{small}
\begin{align*}
g^{\ast}_{j,a,n}&=\frac{a}{a+n}G(t_{j1},\ldots,t_{jd})+\frac{n}{a+n}F_{n}(t_{j1},\ldots,t_{jd})+(2^{d}-3)\bigg\lbrace\frac{a}{a+n}G(t_{j1}-\delta,\ldots,t_{jd}-\delta)\nonumber\\
&+\frac{n}{a+n}F_{n}(t_{j1}-\delta,\ldots,t_{jd}-\delta)\bigg\rbrace-\underset{S}{\sum} \left(\frac{a}{a+n}G(s_{1},\ldots,s_{d})+\frac{n}{a+n}F_{n}(s_{1},\ldots,s_{d})\right).
\end{align*}
\end{small}
After some simplification, we have
\begin{small}
\begin{align*}
g^{\ast}_{j,a,n}&=\frac{a}{a+n}\bigg\lbrace G(t_{j1},\ldots,t_{jd})+(2^{d}-3)G(t_{j1}-\delta,\ldots,t_{jd}-\delta)-\underset{S}{\sum}G(s_{1},\ldots,s_{d})\bigg\rbrace\nonumber\\
&+\frac{n}{a+n}\bigg\lbrace F_{n}(t_{j1},\ldots,t_{jd})+(2^{d}-3)F_{n}(t_{j1}-\delta,\ldots,t_{jd}-\delta)-\underset{S}{\sum}F_{n}(s_{1},\ldots,s_{d})\bigg\rbrace.
\end{align*}
\end{small}
Now, using \eqref{g} and \eqref{fn}, we have $g^{\ast}_{j,a,n}=\frac{a}{a+n}g_{j}+\frac{n}{a+n}f_{j,n}$. Clearly, $g^{\ast}_{j,a,n}\geq\frac{n}{a+n}f_{j,n}$ which concludes that $\frac{f_{j,n}}{g^{\ast}_{j,a,n}}\leq 1+a/n\leq 1+a$, for $j\in\mathcal{I}$. Consequently, $D_{kull}(F_{n},G^{\ast}_{a,n})\leq\log(1+a)<\infty$. On the other hand, applying the Glivenko-Cantelli theorem in \eqref{fn} and \eqref{g-ast} implies $f_{j,n}\xrightarrow{a.s.} f_j$ and $g^{\ast}_{j,a,n}\xrightarrow{a.s.} f_j$, as $n\rightarrow\infty$, where $f_{j}$ denotes $F(t_{j1},\ldots,t_{jd})+F(t_{j1}-\delta,\ldots,t_{jd}-\delta)-\underset{S}{\sum} F(s_{1},\ldots,s_{d})$. Hence, by the discrete version of the dominated convergence theorem, we have
\begin{small}
\begin{align*}
D_{kull}(F_{n},G^{\ast}_{a,n})=\sum_{j\in\mathcal{I}}f_{j,n}(\log f_{j,n}-\log g^{\ast}_{j,a,n})\xrightarrow{a.s.} \sum_{j\in\mathcal{I}}f_{j}(\log f_{j}-\log f_{j})=0.
\end{align*}
\end{small}
The proof is completed by letting $n\rightarrow\infty$ in the last term of \eqref{mix-G}.
\end{proof}
\subsection{Posterior of Mutual Information}\label{sec.Est-sub2}
The proposed BNP posterior of mutual information takes the form:
\begin{small}
\begin{align*}
MI^{pos}=\left[-H^{pos}_{N,a+n,k}(F)+\sum_{i=1}^{d}H^{pos}_{N,a+n,k}(F_{i})\right]^{+},
\end{align*}
\end{small}
where $b^{+}=\max(b,0)$. Note that, the proposed estimator ensures the non-negativity of the BNP mutual information estimation. The BNP test in the next section will be proposed based on $MI^{pos}$. Since
 implementation of $MI^{pos}$ requires considering choices of $a$ and $G$ in $H_{N,a+n,k}^{pos}$. Hence, it is necessary to look carefully at the impact of these two ingredients on the approach. For instance, $G$ should be chosen to ensure  compatibility between $G$ and data. That is, to avoid the so-called ``prior-data conflict" \cite[]{Evans06}. As for $a$, we assess the effect of this parameter on $H_{N,a+n,k}^{pos}$ for fixed $n$ as $N\rightarrow\infty$ in the next theorem.
\begin{theorem}\label{a-increasing}
Let $\mathbf{x}_{d\times n}$ be a sample from $d$-variate distribution function $F$ and $F|\mathbf{x}\sim DP(a+n,G^{\ast}_{a,n})$. Then, for a fixed $n$, $\liminf E(H_{N,a+n,k}^{pos})\geq H(F)+c$ as $N\rightarrow\infty$ and $a\rightarrow\infty$, where $c\neq0$.
\end{theorem}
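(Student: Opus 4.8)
The plan is to turn this double-limit statement into an iterated limit, letting $N\to\infty$ first and $a\to\infty$ afterwards, and to read the non-vanishing bias directly off the mixture decomposition \eqref{mix-G}. The driving intuition is that for fixed $n$ a large concentration parameter $a$ forces the posterior base measure $G^{\ast}_{a,n}$ back towards the prior center $G$, so the estimator ends up tracking $H(G)$ rather than $H(F)$.

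First I would apply the conjugacy of the Dirichlet process together with part (i) of Lemma \ref{E-V.pri}, exactly as in the opening line of the proof of Theorem \ref{pos-mean}, to get $E(H^{pos}_{N,a+n,k})\to H(G^{\ast}_{a,n})$ as $N\to\infty$ for each fixed $a,n$. This reduces the claim to showing $\liminf_{a\to\infty}H(G^{\ast}_{a,n})\ge H(F)+c$.

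Then I would insert the four-term formula \eqref{mix-G} and track each summand as $a\to\infty$ with $n$ held fixed, using $a/(a+n)\to1$ and $n/(a+n)\to0$. The first term gives $\frac{a}{a+n}H(G)\to H(G)$ and the second gives $\frac{n}{a+n}H(F_{n})\to0$. For the cross term I would reuse the bound $D_{kull}(F_{n},G^{\ast}_{a,n})\le\log(1+a)$ already derived inside the proof of Theorem \ref{pos-mean}: since it is non-negative and $n\log(1+a)/(a+n)\to0$, the squeeze theorem kills $\frac{n}{a+n}D_{kull}(F_{n},G^{\ast}_{a,n})$. The remaining term $\frac{a}{a+n}D_{kull}(G,G^{\ast}_{a,n})$ is the only one whose coefficient does not vanish, and for it I would invoke nothing beyond the non-negativity of the (discretized) Kullback-Leibler divergence of Lemma \ref{def}, namely $D_{kull}(G,G^{\ast}_{a,n})\ge0$. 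Since three of the four terms converge (to $H(G),0,0$) and the fourth is non-negative, the inequality $\liminf_{a}(A_{a}+B_{a})\ge\lim_{a}A_{a}+\liminf_{a}B_{a}$ yields $\liminf_{a\to\infty}H(G^{\ast}_{a,n})\ge H(G)$. Setting $c:=H(G)-H(F)$ then gives the claimed bound, and $c\ne0$ because $G$ is the prior guess rather than the data-generating $F$, so generically $H(G)\ne H(F)$.

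The main obstacle will be the term $\frac{a}{a+n}D_{kull}(G,G^{\ast}_{a,n})$, whose multiplier tends to $1$ rather than to $0$. A Taylor/$\chi^{2}$ expansion in $\epsilon=n/(a+n)$ suggests it is in fact $O(\epsilon^{2})$ and hence vanishes, which would upgrade the $\liminf$ to a genuine limit equal to $H(G)$; but making that rigorous would require controlling the expansion uniformly over the discretization $\{\mathbf{t}_{j}\}_{j\in\mathcal{I}}$ of Lemma \ref{def} and is delicate. The virtue of stating the result as a $\liminf$-inequality is precisely that it sidesteps this difficulty: only $D_{kull}\ge0$ is needed, so no rate estimate for that term is required, and the sole remaining care is to justify passing to $\liminf$ rather than $\lim$ when discarding the non-negative summand.
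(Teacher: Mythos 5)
Your proposal is correct and follows essentially the same route as the paper's proof: reduce via conjugacy and Lemma \ref{E-V.pri}(i) to $H(G^{\ast}_{a,n})$, apply the mixture-entropy formula \eqref{mix-G}, discard the Kullback--Leibler terms, let $a\rightarrow\infty$ so that $\frac{a}{a+n}H(G)\rightarrow H(G)$ and $\frac{n}{a+n}H(F_{n})\rightarrow 0$, and set $c=H(G)-H(F)\neq 0$ on the grounds that the prior guess $G$ differs from the true $F$. The only cosmetic difference is that you squeeze the term $\frac{n}{a+n}D_{kull}(F_{n},G^{\ast}_{a,n})$ to zero via the bound $\log(1+a)$, whereas the paper simply drops it (together with $D_{kull}(G,G^{\ast}_{a,n})$) by non-negativity, which suffices for the stated $\liminf$ inequality.
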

\begin{proof}
For fixed $a$ and $n$, similar to the proof of Theorem \ref{pos-mean}, $E(H^{pos}_{N,a+n,k})\rightarrow H(G^{\ast}_{a,n})$, as $N\rightarrow\infty$. Now, since $D_{kull}(G,G^{\ast}_{a,n})$ and $D_{kull}(F_{n},G^{\ast}_{a,n})$ are non-negative in \eqref{mix-G}, we have $H(G^{\ast}_{a,n})\geq \frac{a}{a+n}H(G)+\frac{n}{a+n}H(F_{n})=I_1+I_2$. Letting $a\rightarrow\infty$ in $I_{1}$ and $I_{2}$ gives
\begin{small}
\begin{align*}
\liminf H(G^{\ast}_{a,n})\geq H(G)=\left(H(G)-H(F)\right)+H(F)=c+H(F).
\end{align*}
\end{small}
Since the prior guess $G$ is not as the same as the true distribution $F$, then $c\neq0$ and the proof is completed.
\end{proof}

\section{Prior-based Test for Independence}\label{sec-pri.test}
Let $\mathbf{X}=(X_{1},\ldots,X_{d})$ be a random vector from an unknown distribution $F$. The problem to be addressed in this section is assessing the hypothesis
%\begin{small}
\begin{align}\label{H0}
\mathcal{H}_{0}:MI(F)=0,
\end{align}
%\end{small}
using  BNP framework. Let $\mathbf{x}_{d\times n}$ be an observed sample of size $n$ from $F\sim DP(a,G)$. In order to implement the test, for a given choice of $a$, let $G$ be the cdf of $N(\mathbf{0}_{d},I_{d})$ and $MI^{pri}=[-H^{pri}_{N,a,k}(F)+\sum_{i=1}^{d} H^{pri}_{N,a,k}(F_i)]^{+}$ be the prior of mutual information between elements of $\mathbf{X}$. From part (i) of Lemma \ref{E-V.pri}, since $E(MI^{pri})\rightarrow MI(G)=0$ as $N\rightarrow\infty$, then $MI^{pri}$ is a good prior to compare with $MI^{pos}$ for displaying the mutual independence between $X_{1},\ldots,X_{d}$. As shown in Theorem \ref{pos-mean}, if the assumption of independence is true, the distribution of $MI^{pos}$ (posterior of mutual information) should be more concentrated around zero than the distribution of $MI^{pri}$ (i.e. $MI^{pos}$ more supports $\mathcal{H}_0$ than $MI^{pri}$); otherwise, the distribution of $MI^{pri}$ should be more concentrated at zero than the distribution of $MI^{pos}$ (i.e. $MI^{pri}$ more supports $\mathcal{H}_0$ than $MI^{pos}$). This comparison is made by using $RB$ with the interpretation as discussed in Section \ref{sec-back}. To this end, we consider an interval $[0,c)$ to compare the concentration of the distribution of the posterior to the prior. The choice of $c$ has a key role in the proposed test. As a simple tactic, we propose to fix $c$ to be close to zero (such as $c=0.05$) such that the prior probability $Pr(MI^{pri}\in [0,c))=0.5$. % The rationale behind considering this probability is to have a fair test. Otherwise, for $Pr(MI^{pri}\in [c_1,c_2))>0.5$, we may reject $H_0$ when it is true and for $Pr(MI^{pri}\in [c_1,c_2))<0.5$, we may accept $H_0$ when it is not true.
 Note that, the value of $Pr(MI^{pri}\in [0,c))$ depends on the choice of the concentration parameter $a$ in $DP(a,G)$. As Lemma \ref{E-V.pri} shows, for small values of $a$, the concentration of $MI^{pri}$ will be decreased around zero. Then, the errors of $MI^{pri}$ (i.e. large values of $MI^{pri}$) will be increased. This may cause to decrease the value of $Pr(MI^{pri}\in [0,c))$, which may lead to have  incorrect values for $RB$.
%It may cause $c_1$ gets a negative value with a large magnitude.
To avoid this difficulty, we need to increase the value of $a$ such that $a$ does not exceed $n/2$.  Hence, contrary to the estimation problem where $a$ should be selected to be small, the choice of $a$ in the test is different. Algorithm A helps to elicit  suitable choices for $a$ to run the test.

\begin{small}
\noindent\textbf{Algorithm A:} \textit{Selecting $a$ in $MI^{pri}$ for testing of independence}
\begin{itemize}
\item[i.] Set a small fixed value $c$, say $c=0.05$.
\item[ii.] Choose the  value of $a$ such that $Pr(0\leq MI^{pri}< c)=0.5$. The preceding probability can be estimated as follows:
\begin{itemize}
\item[a.] Generate a sample of $r$ values from $MI^{pri}$. The steps of sampling from $MI^{pri}$ are detailed in Algorithms B, Section \ref{sec-computations}.
\item[b.] Consider the ratio of the values of $MI^{pri}$ contents of $[0,c)$ as the approximation of $Pr(0\leq MI^{pri}< c)$.
\item[c.]  If the approximated probability is more
(less) than 0.5, then  decrease (increase) the value of $a$ to
reach the value of 0.5.
\end{itemize}
\end{itemize}
\end{small}

Algorithm A was thoroughly implemented for several values of $c$ and $d$.  Table \ref{prior-probability} in  Appendix B reports  appropriate values of $a$ when $d=2$.  The results for $d>2$  are found to be similar. That is, Table \ref{prior-probability} may be used for any arbitrary dimension $d$. Thus, from Table \ref{prior-probability}, an appropriate  choice of $a$ to carry out the test is  $a=1$.

\section{Computational Algorithms for Testing of Independence}\label{sec-computations}

The following algorithms summarize the main steps to carry out the test of independence for \eqref{H0}. Since closed forms of densities of $MI^{pri}$ and $MI^{pos}$ are not available, their empirical distributions are required to implement the below algorithms.

\bigskip

\noindent\textbf{Algorithm B:} \textit{Prior-based test for independence}
\begin{small}
\begin{enumerate}
\item Use Algorithm A to choose a value of $a$. Note that, $a=1$ is a recommended choice to proceed the test.

\item For the selected $a$ in the previous step, let $G$ be the cdf of $N(\mathbf{0}_{d},I_{d})$ and  generate a sample from $DP(a,G)$ as described in Section 2.1.

\item For the   sample generated in the previous step, use \eqref{k-NN.pri} to compute $H^{pri}_{N,a,k}(F)$ and $H^{pri}_{N,a,k}(F_{i})$, for $i=1,\ldots,d$.

\item
Substitute $H^{pri}_{N,a,k}(F)$ and $H^{pri}_{N,a,k}(F_{i})$'s into \eqref{MI-Entropy} to compute $MI^{pri}=[-H^{pri}_{N,a,k}(F)+\sum_{i=1}^{d}H^{pri}_{N,a,k}(F_{i})]^{+}$.

\item 
Repeat steps 2-4 to generate a sample of $\ell$ values from $MI^{pri}$.

\item
Use steps 2-5  to obtain a sample of $\ell$ values from $MI^{pos}$ by replacing $a$ by $a+n$, $G$ by $G^{\ast}_{a,n}$, \eqref{k-NN.pri} by \eqref{k-NN.post}   and  prior by posterior.

\item Let $\hat{F}_{MI^{pri}}$ denote the empirical cdf of $MI^{pri}$ based
on the prior sample in step (2). Let $\hat{F}_{MI^{pos}}$ denote the
empirical cdf of $MI^{pos}$ based on the posterior sample in step (3). Estimate $RB_{MI}(0\,|\,\mathbf{x})={\pi_{MI^{pos}}(0)}/{\pi_{MI^{pri}}(0)}$ by
\begin{equation}
\widehat{RB}_{MI}(0\,|\,\mathbf{x})=\{\hat{F}_{MI^{pos}}(c)-\hat{F}%
_{MI^{pos}}(0)\}/\{\hat{F}_{MI^{pri}}(c)-\hat{F}%
_{MI^{pri}}(0)\}, \label{rbest}%
\end{equation}
%and For $d\in\lbrack\hat{d}_{i/M},\hat
%{d}_{(i+1)/M})$, estimate $RB_{MI}(d\,|\,\mathbf{x})={\pi_{MI^{\ast}}(d)}/{\pi_{MI^{pri}}(d)}$ by
%\begin{equation}
%\widehat{RB}_{MI}(d\,|\,\mathbf{x})=M\{\hat{F}_{MI^{\ast}}(\hat{d}_{(i+1)/M})-\hat{F}%
%_{MI^{\ast}}(\hat{d}_{i/M})\}, \label{rbest}%
%\end{equation}
%the ratio of the estimates of the posterior and prior contents of $[\hat{d}_{i/M},\hat{d}_{(i+1)/M}).$ Thus, we estimate $RB_{MI}(0\,|\,\mathbf{x})=\frac{\pi_{MI^{\ast}}(0)}{\pi_{MI^{pri}}(0)}$ by $\widehat{RB}_{MI}(0\,|\,\mathbf{x})=M\widehat{F}_{MI^{\ast}}(\hat{d}_{p_{0}})$ where $p_{0}=i_{0}/M$ and $i_{0}$ is chosen so that $i_{0}/M$ is not too small (typically $i_{0}/M\approx0.05)$.\textbf{\smallskip}
\item
Let $M$ be a positive number. For $i=0,\ldots,M,$ let $\hat{d}_{i/M}$ be the estimate of $d_{i/M},$ the $(i/M)$-th prior
quantile of $MI^{pri}$. Here $\hat{d}_{0}$ and $\hat{d}_{1}$ are, respectively, the smallest and the largest value
of the $r$ values generated in step (2). For $d\in (\hat{d}_{0},\hat{d}_{1})$, estimate the strength $DP_{MI}\big(RB_{MI}(d\,|\,\mathbf{x})\leq RB_{MI} (0\,|\,\mathbf{x})\,|\,\mathbf{x}\big)$ by the finite sum
\begin{equation}
\sum_{\{i\geq i_{0}:\widehat{RB}_{MI}(\hat{d}_{i/M}\,|\,\mathbf{x})\leq\widehat{RB}%
_{MI}(0\,|\,\mathbf{x})\}}\big(\hat{F}_{MI^{pos}}(\hat{d}_{(i+1)/M})-\hat{F}_{MI^{pos}}(\hat
{d}_{i/M})\big), \label{strest}%
\end{equation}
\end{enumerate}
\end{small}
where $i_{0}$ is chosen so that $\frac{i_{0}}{M}$ is not too small (typically $\frac{i_{0}}{M}\approx0.05)$ and $\widehat{RB}_{MI}(\hat{d}_{i/M}\,|\,\mathbf{x})=M\{\hat{F}_{MI^{pos}}(\hat{d}_{(i+1)/M})-\hat{F}_{MI^{pos}}(\hat{d}_{i/M})\}$. For fixed $M,$ as $N\rightarrow\infty$ and $\ell\rightarrow\infty,$ then $\hat{d}_{i/M}$ converges almost surely to $d_{i/M}$ and (\ref{rbest}) and (\ref{strest}) converge almost surely to $RB_{MI}(0\,|\,\mathbf{x})$ and
$DP_{MI}\big(RB_{MI}(d\,|\,\mathbf{x})\leq RB_{MI} (0\,|\,\mathbf{x})\,|\,\mathbf{x}\big)$, respectively.  The consistency of the proposed test is achieved by Proposition 6 of \cite{Al-Labadi18}.

\section{Simulation Studies}\label{simulation}
This section reveals the performance of the BNP methodology in testing of independence. To this aim, samples are generated from several $d$-variate distributions. Table \ref{notation} gives the relevant notations of these distributions. First, we consider three common $d$-variate distributions: normal, $t$-student and Maxwell-Boltzmann distributions.  We consider sample sizes $n=20,30$ and $50$. For each sample size, $r=1000$ samples were generated. Each sample gives an $RB$(strength) by setting $k=3$ in Algorithm B. For the test, we set  $\ell=1000$ in Algorithm B.  With regard to Table \ref{prior-probability}, we set $c=0.05$ and thus choose $a=1$ as outlined in Table \ref{prior-probability}. The recorded values of  $RB$ and strength (Str) are the average of the 1000 results.  For the goal of comparison, the  p-value of the test of independence \citep{Berrett19b} are reported in $r$ replication. The $\mathsf{R}$ package \textbf{IndepTest} is used to compute the p-value. The results of the BNP method and its frequentist counterpart \citep{Berrett19a,Berrett19b} are presented in Table \ref{comm.dis.T}.  It follows from Table \ref{comm.dis.T} that the prior based test has a good performance to test independence between $d$ variables. To clear up, for instance, when $N_4(\mathbf{0}_4,\Sigma_4)$ and $n=50$ in Table \ref{comm.dis.T}, the average value of relative belief ratios is $0.53$ with relevant strength $0.07$, which shows the good performance of the proposed test to reject the assumption of mutual independence.

%note that a relatively small value of the true mutual information, such as $0.11$, can represent the mutual dependence between variables (see Table \ref{comm.dis.T}). Hence,

%---- Table of prior data conflict ----%
\begin{landscape}
\begin{table}[h!]
\setlength{\aboverulesep}{0pt}
\setlength{\belowrulesep}{0pt}
\setlength{\extrarowheight}{1.1 mm}
\setlength{\tabcolsep}{6 mm}
\caption{The average values of RB(strength) for testing the mutually independent
under several distributions with $k=3$.}\label{comm.dis.T}
\scalebox{.78}
{
\begin{tabular}{cccccccccc}
\toprule
\multirow{2}[3]{*}{\bfseries Example}&$d$&\multirow{2}[3]{*}{$n$}& \multicolumn{1}{c}{BNP}&\multicolumn{1}{c}{Berrett et al.} &\multirow{2}[3]{*}{\bfseries Example}&$d$&\multirow{2}[3]{*}{$n$}&  \multicolumn{1}{c}{BNP}&\multicolumn{1}{c}{Berrett et al.} \\\cmidrule(lr){4-5}\cmidrule(lr){9-10}
&$(MI^{T})$&&$RB(Str)$ & p-value& &$(MI^{T})$&&$RB(Str)$& p-value\\
\cmidrule(lr){1-5}\cmidrule(lr){6-10}
\scalebox{1.1}{$N_{d}(\mathbf{0}_{d},I_{d})$} &\scalebox{1}{2} &\scalebox{1}{20} &$1.97(0.53)$& $0.498$ &\scalebox{1}{$t_{3}(\mathbf{0}_{d},I_{d})$}&\scalebox{1}{2}&\scalebox{1}{20}&$1.50(0.33)$&$0.366$\\
 &(0) &\scalebox{1}{30} &$2.08(0.48)$&$0.508$&&(0.042)&\scalebox{1}{30}&$1.49(0.27)$&$0.350$\\
  & & \scalebox{1}{50}&$2.11(0.52)$& $0.493$&&&\scalebox{1}{50}&$1.22(0.21)$&$0.0326$\\\cmidrule(lr){2-5}\cmidrule(lr){7-10}
   &3 &\scalebox{1}{20} &$1.99(0.56)$& $0.514$&&3&20&$1.23(0.25)$&$0.312$\\
   &(0) &\scalebox{1}{30} &$2.15(0.58)$&$0.507$&&(0.110)&30&$0.98(0.14)$&$0.289$\\
   & &\scalebox{1}{50} &$2.38(0.57)$&$0.494$&&&50&$0.81(0.09)$&$0.236$\\\cmidrule(lr){2-5}\cmidrule(lr){7-10}
   &4 &\scalebox{1}{20} &$2.02(0.62)$&$0.509$&&4&20&$1.15(0.19)$&$0.243$\\
   &(0) &\scalebox{1}{30} &$2.32(0.56)$&$0.505$&&(0.195)&30&$0.93(0.13)$&$0.218$\\
   & &\scalebox{1}{50} &$2.49(0.63)$&$0.508$&&&50&$0.58(0.05)$&$0.188$\\
\cmidrule(lr){1-5}\cmidrule(lr){6-10}
\scalebox{1.1}{$N_{d}(\mathbf{0}_{d},\Sigma_{d})$} &\scalebox{1}{2} &\scalebox{1}{20} &$1.59(0.37)$& $0.401$&\scalebox{1.1}{$t_{20}(\mathbf{0}_{d},I_{d})$}&2&20&$1.95(0.48)$&$0.482$\\
 & (0.066)&\scalebox{1}{30} &$1.53(0.34)$& $0.366$&&(0.001)&30&$2.09(0.49)$&$0.503$\\
  & &\scalebox{1}{50} &$1.45(0.21)$&$0.348$&&&50&$2.14(0.52)$&$0.486$\\\cmidrule(lr){2-5}\cmidrule(lr){7-10}
   &3 &\scalebox{1}{20} &$1.06(0.12)$& $0.269$&&3&20&$1.91(0.42)$&$0.490$\\
   & (0.235)&\scalebox{1}{30} &$0.91(0.09)$&$0.228$&&(0.003)&30&$2.10(0.48)$&$0.475$\\
   & &\scalebox{1}{50} &$0.78(0.05)$& $0.164$&&&50&$2.19(0.53)$&$0.474$\\\cmidrule(lr){2-5}\cmidrule(lr){7-10}
   &4 &\scalebox{1}{20} &$0.98(0.13)$&$0.230$&&4&20&$1.98(0.51)$&$0.478$\\
   &(0.450) &\scalebox{1}{30} &$0.72(0.09)$&$0.160$&&(0.006)&30&$2.13(0.54)$&$0.468$\\
   & &\scalebox{1}{50} &$0.51(0.05)$&$0.109$&&&50&$2.31(0.62)$&$0.471$\\
\cmidrule(lr){1-5}\cmidrule(lr){6-10}
\scalebox{1.1}{$N_{d}(\mathbf{0}_{d},A_{d})$} &\scalebox{1}{2} &\scalebox{1}{20} &$0.91(0.09)$&$0.297$&\scalebox{1.1}{$Mwell(\mathbf{10}_{d})$}&2&20&$1.89(0.52)$&$0.500$\\
 &(0.143) &\scalebox{1}{30} &$0.63(0.06)$&$0.267$&&(0)&30&$2.13(0.47)$&$0.503$\\
  & &\scalebox{1}{50} &$0.56(0.04)$&$0.214$&&&50&$2.15(0.50)$&$0.507$\\\cmidrule(lr){2-5}\cmidrule(lr){7-10}
   &3 &\scalebox{1}{20} &$1.27(0.21)$& $0.312$&&3&20&$1.94(0.54)$&$0.520$\\
   & (0.143)&\scalebox{1}{30} &$1.06(0.10)$&$0.265$&&(0)&30&$2.17(0.64)$&$0.518$\\
   & &\scalebox{1}{50} &$0.64(0.09)$& $0.198$&&&50&$2.24(0.59)$&$0.489$\\\cmidrule(lr){2-5}\cmidrule(lr){7-10}
   &4 &\scalebox{1}{20} &$1.85(0.31)$&$0.317$&&4&20&$2.16(0.64)$&$0.499$\\
   &(0.143) &\scalebox{1}{30} &$1.71(0.39)$& $0.275$&&(0)&30&$2.19(0.67)$&$0.503$\\
   & &\scalebox{1}{50} &$0.91(0.18)$& $0.220$&&&50&$2.50(0.70)$&$0.492$\\
\bottomrule
\end{tabular}
}\label{prior-data}
\end{table}
\end{landscape}

Similar to the study of \cite{Roy}, to consider more interesting scenarios, we included the following six unusual bivariate distributions (UBD):

\noindent\emph{Four clouds}: Let $Z_{1},Z_2,T_1$ and $T_2$ be independent with $Z_1,Z_2\sim N_1(0,1)$ and $Pr(T_1=\pm1)=Pr(T_2=\pm1)=1/2$. Then, consider the random vector $(X_1,X_2)$ with $X_1=Z_1+T_1$ and $X_2=Z_2+T_2$.

\noindent\emph{Circle}: Let $Z_{1},Z_2$ and $U$ be independent with $Z_1,Z_2\sim N_1(0,1)$ and $U\sim U(-1,1)$. Then, consider the random vector $(X_1,X_2)$ with $X_1=\sin(\pi U)+Z_{1}/8 $ and $X_2=\cos(\pi U)+Z_{2}/8 $.

\noindent\emph{Two Parabolas}: Let $U_{1},U_2$ and $T$ be independent with $U_1\sim U(-1,1)$, $U_2\sim U(0,1)$ and $Pr(T=\pm1)=1/2$. Then, consider the random vector $(X_1,X_2)$ with $X_1=U_1 $ and $X_2=T\big(U_{1}^{2}+U_{2}/2 \big)$.

\noindent\emph{Parabola}: Let $U_{1}$ and $U_2$ be independent with $U_1\sim U(-1,1)$ and $U_2\sim U(0,1)$. Then, consider the random vector $(X_1,X_2)$ with $X_1=U_1 $ and $X_2=\big(U_{1}^{2}+U_{2}/2 \big)/2 $.

\noindent\emph{Diamond}: Let $U_{1},U_2\overset{i.i.d.}{\sim}U(-1,1)$. Then, consider the random vector $(X_1,X_2)$ with $X_1=U_1\cos(-\pi/4)+U_2\sin(-\pi/4)$ and $X_2=-U_1\sin(-\pi/4)+U_2\cos(-\pi/4)$.

\noindent\emph{W}:  Let $U_{1}$ and $U_2$ be independent with $U_1\sim U(-1,1)$ and $U_2\sim U(0,1)$. Then, consider the random vector $(X_1,X_2)$ with $X_1=U_1+U_{2}/3 $ and $X_2=4\big( (U_{1}^{2}-1/2)^{2}+U_{2}/n \big)$.

Figure \ref{UBD-plot} shows plots of samples generated from the above distributions. The interesting property of these distributions is that in each pair of random variables, $X_1$ and $X_2$ are uncorrelated but dependent, except in \emph{four clouds} where $X_1$ and $X_2$ are uncorrelated and independent. %See supplementary files of \cite{Newton} for further details to sample from these six distributions.
Table \ref{unusual-table} shows that the assumption of mutual independence is accepted only for \emph{four clouds} in the cases where the  sample size is greater than or equal to $30$.
%the good performance of the BNP method in all cases especially in cases with $n=50$.

 %------
%------ Figure
\begin{figure}[ht]
 \centering
    \subfloat{{\includegraphics[width=4.5cm]{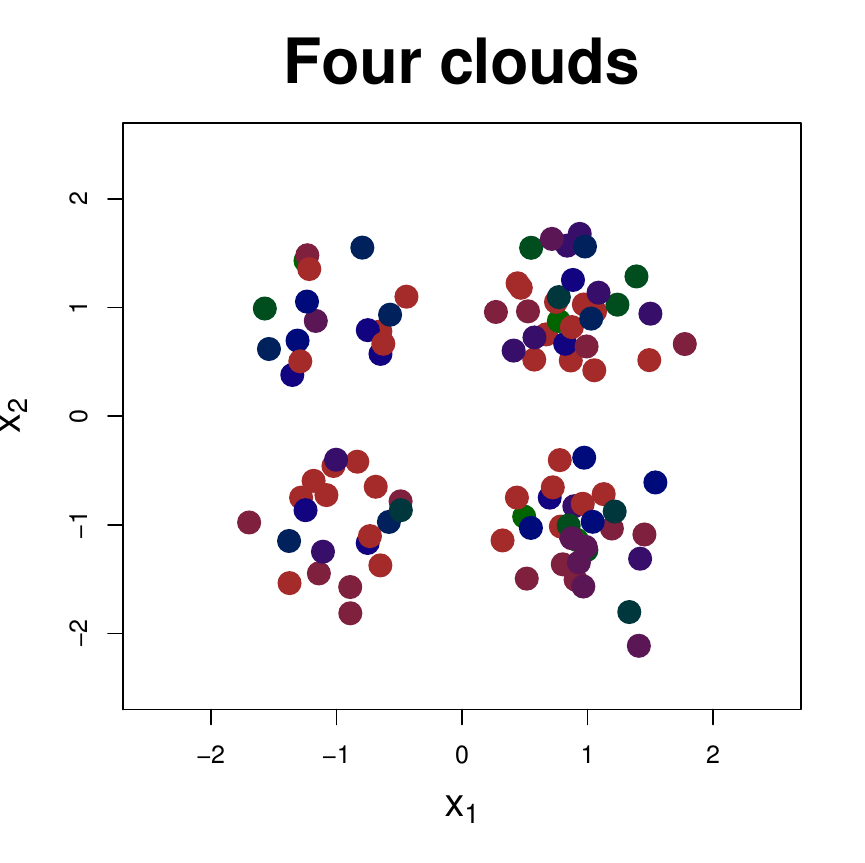} }}%
    \subfloat{{\includegraphics[width=4.5cm]{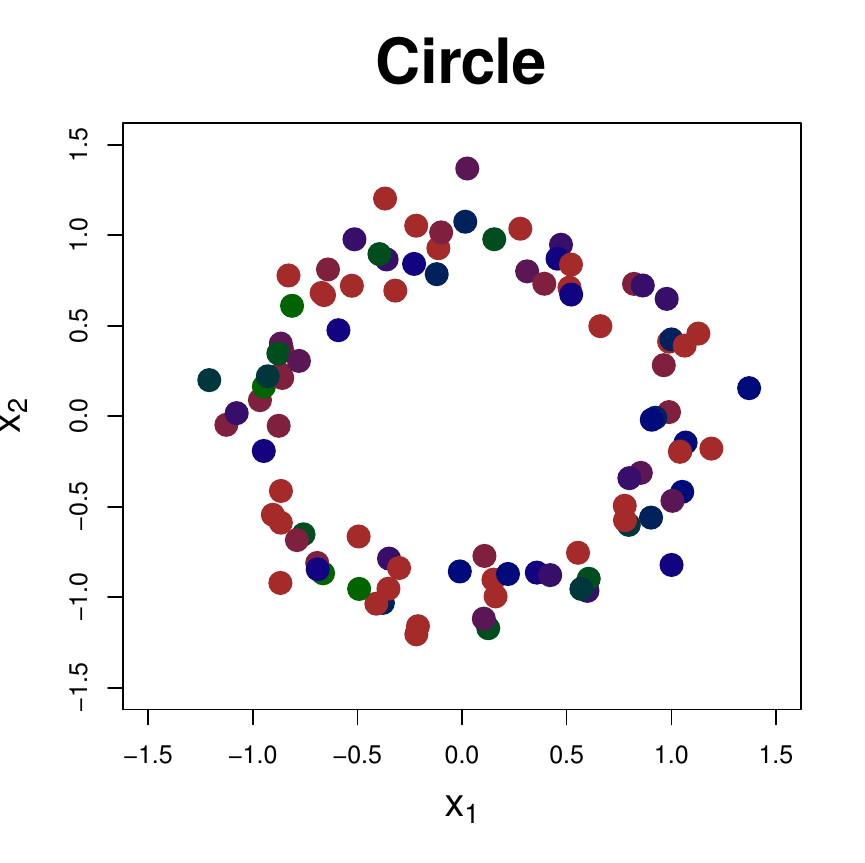} }}
    \subfloat{{\includegraphics[width=4.5cm]{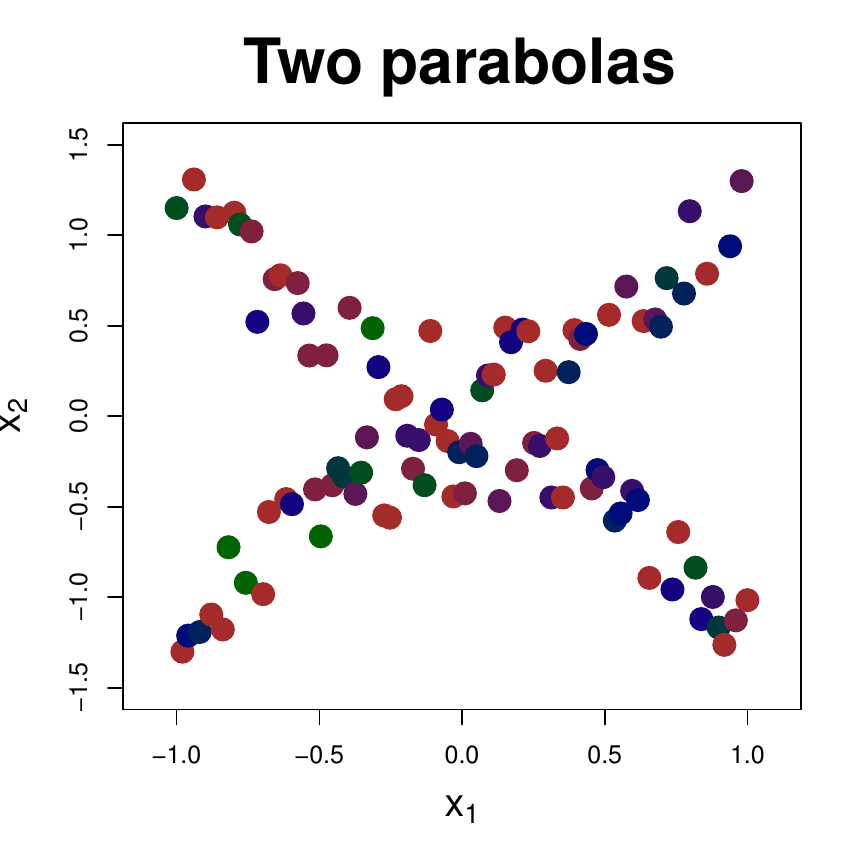} }}
    \qquad
    \subfloat{{\includegraphics[width=4.5cm]{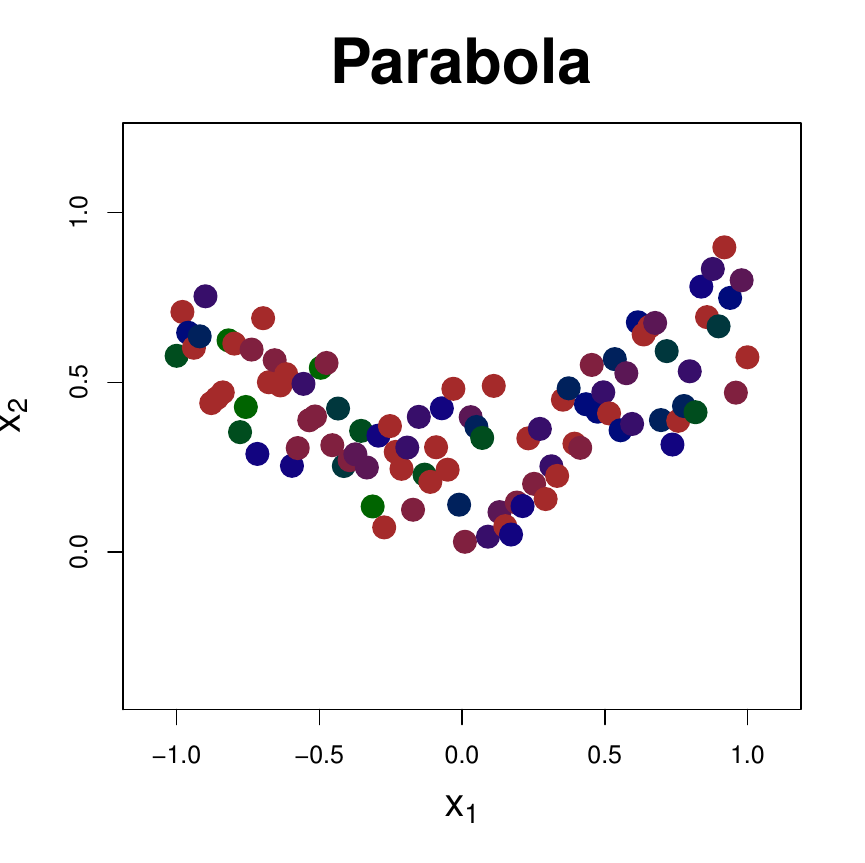} }}%
    \subfloat{{\includegraphics[width=4.5cm]{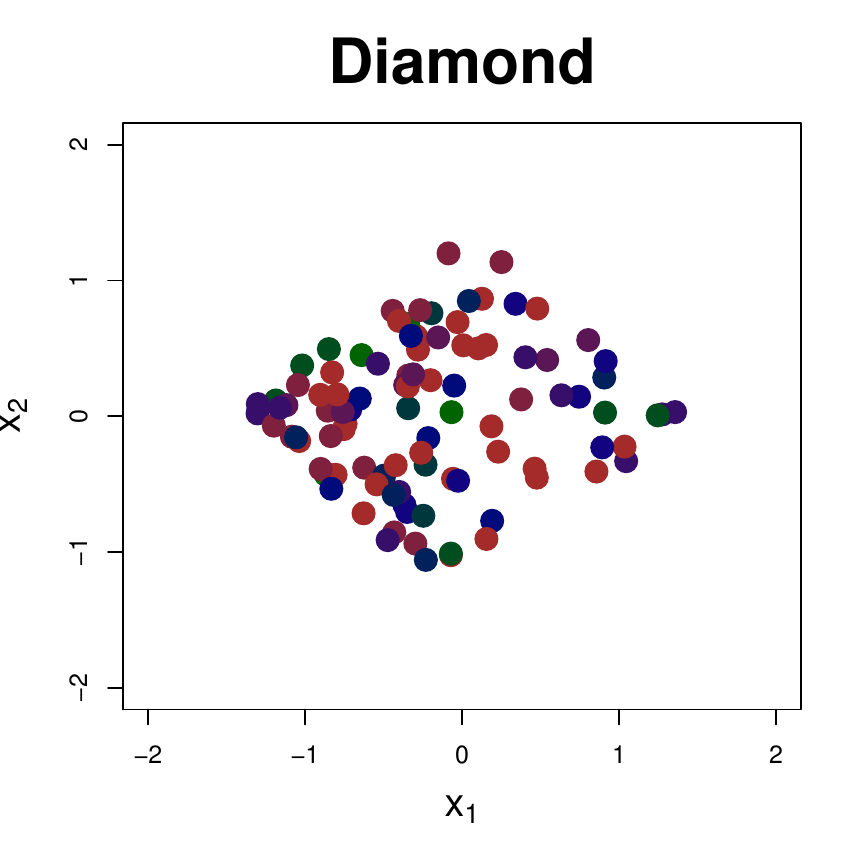} }}
    \subfloat{{\includegraphics[width=4.5cm]{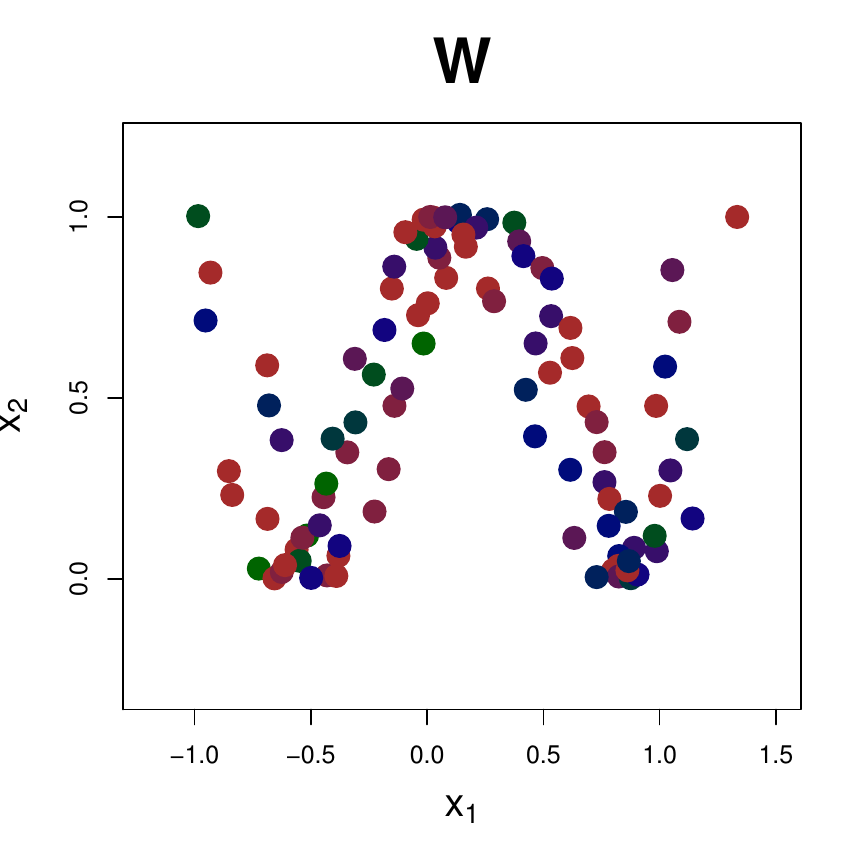} }}
   % \subfloat{{\includegraphics[width=5.2cm]{MI-variate-k-t3-d4-post} }}
    \caption{Samples generated from six UBDs with sample size of $n=100$.}\label{UBD-plot}%
\end{figure}
%------
 %%%%%%%%
 %------
\begin{table}
% Next comes the table caption
\setlength{\tabcolsep}{.17 cm}
\caption{The average values of the $RB$ and its $Str$ over $r$ samples generated from six UBDs with $k=3$.}
\begin{tabular}{cccccccccc}
\toprule
\multirow{2}[3]{*}{\bfseries UBD}&\multirow{2}[3]{*}{$n$}& \multicolumn{2}{c}{BNP}&\multicolumn{1}{c}{Berrett et al.}&\multirow{2}[3]{*}{\bfseries UBD}&\multirow{2}[3]{*}{$n$}& \multicolumn{2}{c}{BNP}&\multicolumn{1}{c}{Berrett et al.} \\\cline{3-4}\cline{8-9}
&&$RB$&$Str$ & $p$-value& &&$RB$&$Str$ & $p$-value\\
\hline
Four clouds&20&$1.67$&$0.40$&$0.493$& Parabola&20&$0.88$&$0.26$&$0.061$\\
&30&$2.17$&$0.48$&$0.498$& &30&$0.66$&$0.12$&$0.036$\\
&50&$2.25$&$0.57$&$0.499$& &50&$0.46$&$0.09$&$0.009$\\
\cline{1-5}\cline{6-10}
Circle&20&$0.97$&$0.21$&$0.178$& Diamond&20&$1.07$&$0.23$&$0.346$\\
&30&$0.60$&$0.08$&$0.049$& &30&$0.89$&$0.20$&$0.273$\\
&50&$0.19$&$0.00$&$0.009$& &50&$0.77$&$0.12$&$0.175$\\
\cline{1-5}\cline{6-10}
Two parabolas&20&$0.74$&$0.15$&$0.045$& W&20&$1.71$&$0.49$&$0.174$\\
&30&$0.31$&$0.03$&$0.012$& &30&$0.99$&$0.22$&$0.064$\\
&50&$0.13$&$0.01$&$0.009$& &50&$0.62$&$0.02$&$0.009$\\
\bottomrule
\end{tabular}\label{unusual-table}
\end{table}

Finally, to evaluate the performance of the proposed method on a real data set, the combined cycle power plant (CCPP) data set is considered. This data set contains 9568 five-dimensional data points. It is collected from 2006 to 2011 and is available at \url{https://archive.ics.uci.edu/ml/datasets/combined+cycle+power+plant}. Its goal is to predict the net hourly electrical energy output of the plant based on the temperature (T), the ambient pressure (AP), the relative humidity (RH) and the exhaust vacuum (V). Thus, it is significant to check whether the four variables T, AP, RH, and V are independent. In addition, besides using all 9568 data points, we considered three samples with sample sizes $n=20,30$ and $50$ generated randomly from the whole data set. The proposed method then is implemented. The results are reported in Table \ref{CCPP}, where it follows clearly from this table  that the assumption of independence between T, AP, RH and V is rejected in all cases. %It is worth noting that, unlike the method of \cite{Berrett19b} which fails to estimate the mutual information for this data set, the BNP method presented  such estimation.

%
%\begin{table}[h!]
%\center
%\setlength{\aboverulesep}{0pt}
%\setlength{\belowrulesep}{0pt}
%\setlength{\extrarowheight}{1 mm}
%\setlength{\tabcolsep}{10.4 mm}
%\caption{BNP estimation of mutual information, RB and strength for CCPP data set with $k=3$ and various sample sizes $n$.}\label{CCPP}
%\scalebox{0.8}
%{
%\begin{tabular}{ccccc}
%\toprule
%\multirow{2}[3]{*}{\bfseries CCPP}& \multicolumn{2}{c}{BNP}&\multicolumn{2}{c}{Berrett et al.} \\\cmidrule(lr){2-3}\cmidrule(lr){4-5}
%n&$MI^{pos}$ &$RB(Str)$ &$MI^{W.KL}$& p-value\\
%\hline
%$20$&$0.282$&$0.65(0.07)$&NaN$^{\dagger}$&$0.237$\\
%$30$&$0.375$&$0.60(0.02)$&NaN&$0.019$\\
%$50$&$0.519$&$0.29(0.01)$&NaN&$0.009$\\
%$9568$&$0.743$&$0.10(0.00)$&NaN&$0.009$\\
%\bottomrule
%\end{tabular}
%}
%\centering
%\begin{tablenotes}
%      \small
%      \item \fontsize{8.6}{8}\selectfont{$^{\dagger}$ Not a number (an undefined or unrepresentable value).}
%    \end{tablenotes}
%\end{table}
%%---- End of Table

\begin{table}[h!]
\center
\setlength{\aboverulesep}{0pt}
\setlength{\belowrulesep}{0pt}
\setlength{\extrarowheight}{1 mm}
\setlength{\tabcolsep}{10.4 mm}
\caption{ The result of the BNP test ($RB$ and strength) and the p-value of the test of \cite{Berrett19b} for CCPP data set with $k=3$ and various sample sizes $n$.}\label{CCPP}
\scalebox{0.8}
{
\begin{tabular}{ccc}
\toprule
\smallskip
\multirow{2}[3]{*}{\bfseries CCPP}& \multirow{2}[3]{*}{$RB(Str)$}&\multirow{2}[3]{*}{p-value} \\

$n$&  & \\
\hline
$20$&$0.65(0.07)$&$0.237$\\
$30$&$0.60(0.02)$&$0.019$\\
$50$&$0.29(0.01)$&$0.009$\\
$9568$&$0.10(0.00)$&$0.009$\\
\bottomrule
\end{tabular}
}
\end{table}

\section{Concluding Remarks}\label{conclusion}
The BNP prior and posterior of mutual information have been proposed. They have been constructed based on using the Dirichlet process  and the $k$-nearest neighbor distance. Several interesting theoretical results have been presented. As a result, a new Bayesian test of independence has been developed. The performance of the procedure has been examined by several interesting examples. The obtained results reflect the excellent performance of the methodology in testing.
%###########

%%%%%%%%%
\begin{appendices}
\section{Proof of Equation \eqref{g}, \eqref{fn} and \eqref{g-ast}}
We show \eqref{g} by two below steps (the proof for \eqref{fn} and \eqref{g-ast} are similar).

\noindent Step 1: For $d=3$, consider $\mathbf{Z}=(Z_1,Z_2,Z_3)\sim G$ and $\left\lbrace\mathbf{t}_{j}\right\rbrace_{j\in\mathcal{I}}\subseteq\mathbb{R}^{3}$, where $\mathbf{t}_{j}=(t_{j1},t_{j2},t_{j3})$. Then for a given $\delta>0$, we can write
\begin{small}
\begin{align}\label{I's}
Pr(Z_{1}\leq t_{j1}, Z_{2}\leq t_{j2},Z_{3}\leq t_{j3})&=Pr(t_{j1}-\delta<Z_{1}\leq t_{j1}, t_{j2}-\delta<Z_{2}\leq t_{j2},t_{j3}-\delta<Z_{3}\leq t_{j3})\nonumber\\
&+Pr(Z_{1}\leq t_{j1}-\delta, t_{j2}-\delta<Z_{2}\leq t_{j2},t_{j3}-\delta<Z_{3}\leq t_{j3})\nonumber\\
&+Pr(t_{j1}-\delta<Z_{1}\leq t_{j1},Z_{2}\leq t_{j2}-\delta,t_{j3}-\delta<Z_{3}\leq t_{j3})\nonumber\\
&+Pr(t_{j1}-\delta<Z_{1}\leq t_{j1}, t_{j2}-\delta<Z_{2}\leq t_{j2},Z_{3}\leq t_{j3}-\delta)\nonumber\\
&+Pr(Z_{1}\leq t_{j1}-\delta, Z_{2}\leq t_{j2}-\delta,t_{j3}-\delta<Z_{3}\leq t_{j3})\nonumber\\
&+Pr(Z_{1}\leq t_{j1}-\delta, t_{j2}-\delta<Z_{2}\leq t_{j2},Z_{3}\leq t_{j3}-\delta)\nonumber\\
&+Pr(t_{j1}-\delta<Z_{1}\leq t_{j1}, Z_{2}\leq t_{j2}-\delta,Z_{3}\leq t_{j3}-\delta)\nonumber\\
&+Pr(Z_{1}\leq t_{j1}-\delta, Z_{2}\leq t_{j2}-\delta,Z_{3}\leq t_{j3}-\delta)\nonumber\\
&=I_1+I_2+I_3+I_4+I_5+I_6+I_7+I_8.
\end{align}
\end{small}
\noindent On the other hand,
\begin{small}
\begin{align*}
I_2+Pr(Z_{1}\leq t_{j1}-\delta, Z_{2}\leq t_{j2}-\delta,Z_{3}\leq t_{j3}-\delta)=Pr(Z_{1}\leq t_{j1}-\delta, Z_{2}\leq t_{j2},Z_{3}\leq t_{j3}).
\end{align*}
\end{small}
Then, we have
\begin{small}
\begin{align}\label{I-2}
I_2=G(t_{j1}-\delta,  t_{j2}, t_{j3})-G(t_{j1}-\delta, t_{j2}-\delta, t_{j3}-\delta).
\end{align}
\end{small}
\noindent Similarly,
\begin{small}
\begin{align}
I_3&=G(t_{j1},  t_{j2}-\delta, t_{j3})-G(t_{j1}-\delta, t_{j2}-\delta, t_{j3}-\delta),\label{I-3}\\
I_4&=G(t_{j1},  t_{j2}, t_{j3}-\delta)-G(t_{j1}-\delta, t_{j2}-\delta, t_{j3}-\delta),\label{I-4}\\
I_5&=G(t_{j1}-\delta,  t_{j2}-\delta, t_{j3})-G(t_{j1}-\delta, t_{j2}-\delta, t_{j3}-\delta),\label{I-5}\\
I_6&=G(t_{j1}-\delta,  t_{j2}, t_{j3}-\delta)-G(t_{j1}-\delta, t_{j2}-\delta, t_{j3}-\delta),\label{I-6}\\
I_7&=G(t_{j1},  t_{j2}-\delta, t_{j3}-\delta)-G(t_{j1}-\delta, t_{j2}-\delta, t_{j3}-\delta).\label{I-7}.
\end{align}
\end{small}
by substituting \eqref{I-2}, \eqref{I-3}, \eqref{I-4}, \eqref{I-5}, \eqref{I-6}, and \eqref{I-7} into \eqref{I's}, we have
\begin{small}
\begin{align*}
G(t_{j1},  t_{j2}, t_{j3})&=Pr(t_{j1}-\delta<Z_{1}\leq t_{j1}, t_{j2}-\delta<Z_{2}\leq t_{j2},t_{j3}-\delta<Z_{3}\leq t_{j3})\nonumber\\
&+G(t_{j1}-\delta,  t_{j2}, t_{j3})+G(t_{j1},  t_{j2}-\delta, t_{j3})+G(t_{j1},  t_{j2}, t_{j3}-\delta)\nonumber\\
&+G(t_{j1}-\delta,  t_{j2}-\delta, t_{j3})+G(t_{j1}-\delta,  t_{j2}, t_{j3}-\delta)+G(t_{j1},  t_{j2}-\delta, t_{j3}-\delta)\nonumber\\
&+\left(-(2^3-2)+1\right)G(t_{j1},  t_{j2}, t_{j3}).
\end{align*}
\end{small}
After simplification, we get
\begin{small}
\begin{align*}
Pr(t_{j1}-\delta<Z_{1}\leq t_{j1}, t_{j2}-\delta<Z_{2}\leq t_{j2},t_{j3}-\delta<Z_{3}\leq t_{j3})&=G(t_{j1},  t_{j2}, t_{j3})\nonumber\\
&+(2^3-3)G(t_{j1}-\delta,  t_{j2}-\delta, t_{j3}-\delta)\nonumber\\
&-\underset{S_3}{\sum} G(s_{1},s_{2},s_{3}),
\end{align*}
\end{small}
where $S_3=\lbrace (s_1,s_2,s_3):\, s_k\in\lbrace t_{jk}-\delta,t_{jk}\rbrace, k\in\lbrace 1,2,3\rbrace\rbrace\setminus\lbrace(t_{j1}-\delta,  t_{j2}-\delta, t_{j3}-\delta),(t_{j1},  t_{j2}, t_{j3})\rbrace$.

\noindent Step 2: Now, generalize step 1 for $d>3$ to conclude the result.

\newpage

\section{}
\begin{small}
\begin{table}[h!]
\center
\setlength{\aboverulesep}{0pt}
\setlength{\belowrulesep}{0pt}
\setlength{\extrarowheight}{1.1 mm}
\setlength{\tabcolsep}{9.5 mm}
\caption{Values of $Pr(MI^{pri}\in[0,c))$ in Algorithm A to choose $a$ for the BNP test of independence with $k=3$ and $d=2$.}
\scalebox{.9}
{
\begin{tabular}{c|cccc}
\toprule
$c$&$a=0.05$&\cellcolor{gray!10}$a=1$&$a=5$&$a=10$\\
\hline
$0.01$&0.314&\cellcolor{gray!10}0.473&0.477&0.490\\
$0.02$&0.317&\cellcolor{gray!10}0.479&0.486&0.499\\
$0.03$&0.323&\cellcolor{gray!10}0.481&0.495&0.510\\
$0.04$&0.325&\cellcolor{gray!10}0.491&0.510&0.520\\
\cellcolor{gray!10}$0.05$&\cellcolor{gray!10}0.327&\cellcolor{gray!30}0.498&0.521&0.538\\
$0.06$&0.331&0.516&0.533&0.549\\
$0.07$&0.333&0.520&0.549&0.563\\
$0.08$&0.337&0.533&0.554&0.579\\
$0.09$&0.342&0.548&0.576&0.601\\
$0.1$&0.344&0.568&0.600&0.626\\
\bottomrule
\end{tabular}
}\label{prior-probability}
\end{table}
\end{small}
%------
%---- Table of notations
\begin{small}
\setlength{\extrarowheight}{2mm}
\begin{table}[h!]
\centering
\caption{Description of notations}\label{notation}
\scalebox{0.8}{
\begin{tabular}{l}
\toprule
%\diagbox{$H$}{$a$} & 0.5 & 2 & 5 \\
%\textbf{Notation:} Description  \\
%\hline
{\parbox{16cm}{1. $\mathbf{c}_{2}:=(c,c)^{T}$, $I_{2}:=\bigl[\begin{smallmatrix} 1&0\\0&1\end{smallmatrix}\bigr]$, $A_{2}:=\bigl[\begin{smallmatrix} 1&0.5\\0.5&1\end{smallmatrix}\bigr]$, $A_{3}:=\Big[\begin{smallmatrix} 1&0&0\\0&1&0.5\\0&0.5&1\end{smallmatrix}\Bigr]$, $A_{4}:=\biggl[\begin{smallmatrix} 1&0&0&0\\0&1&0&0\\0&0&1&0.5\\0&0&0.5&1\end{smallmatrix}\biggr]$, $\Sigma_{4}:=\biggl[\begin{smallmatrix} 1&0.5&0.5&0.5\\0.5&2&0.5&0.5\\0.5&0.5&1&0.5\\0.5&0.5&0.5&1\end{smallmatrix}\biggr]$, and $B_{3}:=\Bigl[\begin{smallmatrix} 1&0.9&0.9\\0.9&1&0.9\\0.9&0.9&1\end{smallmatrix}\Bigr]$.}}\\
{\parbox{16.1cm}{2. $U(a,b)$: An univariate uniform distribution with parameters $a$ and $b$.}}\\
%{\parbox{16.1cm}{3. $N(\mu,\sigma^2)$: An univariate normal distribution with mean $\mu$ and variance $\sigma^2$.}}\\
{\parbox{16cm}{3. $F_{1}\otimes\ldots\otimes F_{d}$: A $d$-variate distribution with $d$ independent marginal distributions $F_{1},\ldots,F_{d}$.}}\\
{\parbox{16cm}{4. $Mwell(\mathbf{c}_{d})=Mwell(c)\otimes\cdots\otimes Mwell(c)$, where $Mwell(c)^\dagger$ denotes the Maxwell-Boltzman distribution with scale parameter $c$ and $MI^T=0$.}}\\
{\parbox{16cm}{5. $N_{d}(\mathbf{0}_{d},\Sigma_{d})$: A $d$-variate normal distribution with mean vector $\mathbf{0}_{d}$ and covariance matrix $\Sigma_{d}$, and $MI^T=\frac{d}{2}\log(2\pi e\sigma^{2}_{i})-\frac{1}{2}\log((2\pi e)^{d}\det(\Sigma))$, where $\sigma^{2}_{i}$ is the $i$-th diagonal element of $\Sigma_d$.}}\\
{\parbox{16cm}{6. $t_{r}(\mathbf{0}_{d},I_{d})^{\dagger}$: A $d$-variate $t$-student distribution with location parameter $\mathbf{0}_{d}$, scale parameter $I_{d}$ and $r$ degrees of freedom, and $MI^T=d\big(\frac{r+1}{2}[\psi((1+r)/2)-\psi(r/2)]+\log[\sqrt{r}B(r/2,1/2)]\big)-\big\lbrace-\log\frac{\Gamma((r+d)/2)}{\Gamma(r/2)(r\pi)^{d/2}}+\frac{r+d}{2}[\psi(\frac{r+d}{2})-\psi(\frac{r}{2})]\big\rbrace$, where $B(\cdot,\cdot)$ denotes beta function.}}\\
{\parbox{16.1cm}{7. $SP_d(LN(0,0.25))^{\dagger}$: A $d$-variate spherical distribution with lognormal distribution $LN(0,0.25)$ for radii.}}\\
\bottomrule
\end{tabular}
}
\centering
\begin{tablenotes}
      %\small
      \item \small{$^{\dagger}$ Required $\mathsf{R}$ packages: shotGroups and distrEllipse.}
   \end{tablenotes}
    \end{table}
    \end{small}
%## End of table
\end{appendices}

\end{document}